\documentclass{article}
\usepackage{arxiv}


\usepackage{balance} 

\usepackage{amsmath}
\usepackage{amsfonts}
\usepackage{pgfplots}
\usepackage{xcolor} 
\pgfplotsset{compat=newest}
\usepackage{pgfplots}
\pgfplotsset{compat=1.17}
\usepackage{pgfplotstable}
\usepackage{tikz}
\usepackage{svg}
\usepackage{newfloat}
\usepackage{tabularx}
\usepackage{booktabs} 
\usepackage{amsmath}  
\usepackage{array}
\usepackage[normalem]{ulem}
\usepackage{newfloat}
\usepackage{amsthm}
\usepackage{amssymb}


\title{FLIGHT: Facility Location Integrating Generalized, Holistic Theory of Welfare}







\author{
    Avyukta Manjunatha Vummintala \\ 
    IIIT Hyderabad \\ 
    \texttt{avyukta.v@research.iiit.ac.in} 
    \And 
    Shivam Gupta \\ 
    IIT Ropar \\ 
    \texttt{shivam.20csz0004@iitrpr.ac.in} 
    \AND
    Shweta Jain \\ 
    IIT Ropar \\ 
    \texttt{shwetajain@iitrpr.ac.in} 
    \And 
    Sujit Gujar \\ 
    IIIT Hyderabad \\ 
    \texttt{sujit.gujar@iiit.ac.in} 
}




         
\newcommand{\BibTeX}{\rm B\kern-.05em{\sc i\kern-.025em b}\kern-.08em\TeX}

\newtheorem{theorem}{Theorem}
\newtheorem{definition}{Definition}

\newtheorem{lemma}{Lemma}

\newcommand{\ourframework}{FLIGHT}
\definecolor{LightViolet}{RGB}{143,0,255}  

\begin{document}


\pagestyle{fancy}
\fancyhead{}


\maketitle


\begin{abstract}
The Facility Location Problem (FLP) is a well-studied optimization problem with applications in many real-world scenarios. Past literature has explored the solutions from different perspectives to tackle FLPs. These include investigating FLPs under objective functions such as utilitarian, egalitarian, Nash welfare, etc. We propose a unified framework, FLIGHT, to accommodate a broad class of welfare notions. The framework undergoes rigorous theoretical analysis, and we prove some structural properties of the solution to FLP. Additionally, we provide approximation bounds, which (under certain assumptions) provide insight into an interesting fact-- as the number of agents arbitrarily increases, the choice of welfare notion is irrelevant. Furthermore, the paper examines a scenario in which the agents are independently and identically distributed (i.i.d.) according to a given probability distribution. In this setting, we derive results concerning the optimal estimator of the welfare and establish an asymptotic result for welfare functions.
\end{abstract}

\keywords{Facility Location; Welfare functions}


\section{Introduction}
\label{sec:intro}

The most commonly studied \emph{Facility Location Problem} (FLP) considers the problem of placing a facility on a line segment (typically normalized as $[0,1]$). 
Here, agents derive certain utilities from this facility. The goal of a planner is to ensure the welfare of the agents who use this facility is maximized. 
Traditional approaches typically rely on predefined welfare functions such as \emph{utilitarian welfare} ~\cite{feigenbaum2017approximately}, which maximizes the total welfare (e.g., travel distance), or \emph{egalitarian welfare}~\cite{Procaccia_Wajc_Zhang_2018}, which maximizes the minimum welfare.

While these approaches are suitable in many scenarios, they may not be sufficient to capture the complexity and nuances of real-world applications, especially when the relationship between agents and the facility involves non-linear or application-specific factors. Factors such as varying environmental conditions and resource constraints can introduce non-linearities that complicate the optimization process. In many cases, using simple distance-based models can lead to suboptimal solutions that fail to reflect the true welfare of the system.  Moreover, with the rise of \emph{artificial intelligence} (AI) and \emph{machine learning} (ML) techniques, there is a growing trend toward \emph{learning} welfare functions from data rather than relying on predefined or assumed models. Recent advancements in machine learning have shown that welfare functions can be inferred directly from historical data, allowing for a more flexible and context-aware approaches ~\cite{pardeshi2024learning,zadimoghaddam2012efficiently,balcan2014learning}. This shift towards data-driven models underscores the importance of a generalized framework that can accommodate learned welfare functions, making it adaptable to changing environments. 
Hence, a \emph{generalized framework} is necessary to accommodate various welfare functions that can adapt to these complexities.

One approach to generalizing welfare functions is through the use of \emph{\( p \)-mean functions}~\cite{bourbaki1987topological,gupta2022l_p}, which provide a continuous spectrum of solutions ranging from utilitarian welfare (when \( p = 1 \)) to egalitarian welfare (when \( p = \infty \)). \( p \)-mean functions allow more control over the system's balance between efficiency and fairness.                                                        
\( p \)-mean functions, while powerful, may still not fully capture the diversity of welfare considerations present in real-world applications. In this paper, we seek to go beyond \( p \)-mean functions by introducing a \emph{generalized framework} that allows for the inclusion of a wide variety of welfare functions, including but not limited to \( p \)-mean functions. Our \emph{generalized welfare framework} -- \ourframework, is designed to handle welfare functions that are learned from data, or defined based on specific application needs.  \ourframework\ framework is flexible enough to incorporate traditional welfare functions, such as utilitarian and egalitarian welfare, and more complex welfare functions that arise in modern applications as long as the welfare function is non-increasing for each agent from its location.

We begin by establishing several key properties of generalized welfare functions, such as \emph{concavity} and \emph{location invariance}, which are essential for ensuring tractable optimization. Our goal is to study the structural properties of the solution to FLP modelled in \ourframework. First, we explore more specific properties of these welfare functions. {Next, for  a concave positive welfare function $\alpha$ and  an arbitrary welfare function $\beta$, we provide bound on optimal welfare achieved by $\alpha$ with respect to the welfare achieved by $\beta$}

Our results show that a class of generalized welfare functions can approximate others with a constant approximation ratio, ensuring that our framework remains efficient. 
Often, a practitioner might be more interested in expected welfare than exact welfare for every instance. Towards this,
we investigate probabilistic versions of the facility location problem in which the agents are independently and identically distributed (i.i.d.) according to a given probability distribution. 
We derive results concerning the optimal estimator of the welfare and provide an asymptotic property of welfare functions.
In summary, our contributions are as follows:

\subsection{Our Contributions}
\label{sec:contributions}

\begin{enumerate}
    
    
    \item We propose a unified framework \textit{FLIGHT} that is capable of accommodating classical welfare functions, including utilitarian, egalitarian, and Nash welfare functions (Section \ref{sec:flight_frmwrk})
    
    \item {Under the concavity assumption of utility function,} we derive a series of theoretical results concerning the structural properties of generalized welfare functions. (Section ~\ref{sec:alpha_welfare}) Specifically, we prove:
    \begin{itemize}
        \item \textbf{Theorem 1:} Concavity of the welfare function,
        \item \textbf{Theorem 2:} Location invariance of the welfare function,
        \item \textbf{Theorem 3:} Behavior under agent shifts, and
        \item \textbf{Theorem 4:} Maximum shift property.
    \end{itemize}
    We then explore more specialized properties under stronger assumptions, including: \
    \begin{itemize}
        \item \textbf{Theorem 5:} Constant approximation bound for concave and positive utility functions,
        \item \textbf{Theorem 8:} Bounding distance between the peaks based upon the agent location profile.
    \end{itemize}
    \item In Section ~\ref{sec:prob_alpha_welfare}, we extend the analysis to probabilistic versions of the facility location problem. We establish estimation bounds (\textbf{Theorem 9} and \textbf{Theorem 10}) and derive asymptotic results, including \textbf{Theorem 11}, which provides an asymptotic property of welfare functions. 
    
\end{enumerate}
\section{Related Work}

The facility location problem (FLP) has a long and rich history, with its origins tracing back to $17^{\text{th}}$-century mathematicians like Pierre de Fermat and Evangelista Torricelli, who studied geometric optimization problems involving the positioning of points to minimize distances to a given set of locations, known as the Fermat-Weber problem~\cite{drezner1995facility}. This early work laid the foundation for modern FLP. The field saw significant growth after World War II, spurred by advances in operations research, as facility location became crucial for industrial planning, supply chains, and logistics~\cite{koopmans1957assignment}. During this period, figures such as Harold Kuhn formalized mathematical models that enabled the practical application of FLP to real-world challenges, ranging from public service placement to telecommunications infrastructure~\cite{weber1909standort}.
In modern times, the facility location problem has found broad applications in diverse fields such as operations research, computer science, and electronics. With the rise of data-driven decision-making, facility location models are now applied in cloud computing infrastructure, data centers, network design, and even in the placement of sensors in wireless networks ~\cite{sl1979algorithmic}. The continued relevance of facility location models underscores their versatility in addressing problems that require optimal resource allocation and spatial planning.

\subsection{General Facility Location}
The general facility location problem has been widely studied across various fields due to its applications in logistics, urban planning, and operations research. 
A general overview of the results and variants of FLPs can be found in ~\cite{farahani2009facility,melo2009facility, chan2021mechanism}. Several variants of the FLP have been studied, such as obnoxious facility location~\cite{tamir1991obnoxious} and capacitated facility location ~\cite{wu2006capacitated}. Online FLPs are also studied where the agents arrive in an online fashion and a set of facilities is maintained~\cite{meyerson2001online,fotakis2008competitive}.
Additionally, \cite{Leitner2014} examines the polytope associated with the asymmetric version of the facility location problem. 
\cite{fotakis2013strategyproof} also study facility location with concave welfare functions. However, their focus is on designing algorithms with a constant approximation ratio, whereas our work investigates the structural properties of such a system. \cite{snyder2006facility} considers a probabilistic view of FLPs. This is relevant as we also perform a probabilistic analysis.

\subsection{Facility Location on a Line, Fairness, and Strategyproofness}
The facility location problem on a line, where both agents and facilities are confined to a linear domain, has garnered significant attention for its simplicity and traceability. \cite{procaccia2013approximate, Procaccia_Wajc_Zhang_2018} provide approximation guarantees to deterministic and randomized mechanisms that try to minimize total cost while maintaining strategyproofness to ensure no agent can manipulate the outcome. These works highlight the need for welfare functions that incorporate fairness, and our framework addresses this requirement.

Recent work on fairness in facility location problems has become increasingly relevant as a growing emphasis has been placed on equitable distribution across agents ~\cite{Li_Li_Chan_2024,chen2020strategyproof,lam2021balancing}. \cite{moulin2004fair} introduced the Nash welfare function, establishing its foundational role in welfare economics. \cite{lam2023nash} further highlight its application in facility location, demonstrating that the Nash welfare function effectively balances fairness and efficiency.
 This is particularly important for our generalized welfare framework, which aims to extend beyond specific functions like Nash welfare. Furthermore, \cite{chen2020strategyproof} introduce algorithms for 2-facility location that ensure envy-freeness, reinforcing the importance of fairness in our work.  \cite{aziz2023proportional} examine the problem of \textit{proportional fairness} in obnoxious facility location, where facilities are undesirable to agents and fairness becomes a key concern. \cite{wang2024positive} introduce the concept of \textit{positive intra-group externalities} in facility location, focusing on how intra-group dynamics affect utility and strategyproof mechanisms \cite{wang2024positive}. 

\subsection{Welfare Functions and $p$-mean Functions}
Welfare functions have long been central to decision-making and resource allocation in facility location. 
\cite{balcan2014learning} presents a method for learning welfare functions from revealed preferences, which is critical as our generalized framework aims to accommodate complex and dynamically evolving welfare functions. ~\cite{pardeshi2024learning} explores the theoretical front of learning welfares or preferences through the context of generalization bounds. In the context of Nash welfare, \cite{Kurokawa2016} demonstrates its use in allocation problems, reinforcing the importance of designing flexible welfare functions that balance fairness and efficiency.

Researchers have also explored generalizations of utilitarian and egalitarian welfare through $p$-mean functions ~\cite{gupta2022l_p}, which can be viewed as a parameterized family of welfare functions where varying the parameter $p$ adjusts the balance between fairness and efficiency ~\cite{daskin2015p}. For instance, $p = 1$ corresponds to utilitarian welfare, $p = \infty$ corresponds to egalitarian welfare, and intermediate values of $p$ provide trade-offs between these extremes. Our work builds on these concepts by integrating $p$-mean functions into a broader framework for generalized welfare functions.

\cite{barman2024compatibility} and \cite{lam2023nash} contribute to the growing body of work on Nash welfare, focusing on balancing fairness and efficiency. Our framework expands on these ideas by allowing for general welfare functions that can capture more complex and non-linear utility structures, as noted by ~\cite{Drezner2001} in facility location. The increasing need for \textit{learned} generalized welfare functions \cite{pardeshi2024learning,balcan2014learning,zadimoghaddam2012efficiently} to accommodate engineering applications and other real-world complexities further motivates our research. In the next section, we will introduce the formal problem setup and explain the notations.

\section{Preliminaries}
\label{sec:prelim}

\subsection{Facility Location Problem Setup}
We consider a scenario in which a set of \( n \) agents, denoted by \( N = \{1, \dots, n\} \), are positioned along the interval\footnote{Note that the [0,1] domain can be extended and translated to be any closed interval.} \([0,1]\). Each agent \( i \in N \) is located at a specific point \( x_i \in [0,1] \), and the collective set of agent locations is represented by the vector \( \mathbf{x} = (x_1, \dots, x_n) \). Without loss of generality (w.l.o.g.), we assume that the agent positions are ordered such that \( x_1 \leq x_2 \leq \dots \leq x_n \).

The social planner's problem is placing a single facility that serves these agents. Let the mechanism of this mapping be \( f : [0,1]^n \rightarrow [0,1] \), which takes the vector of agent locations \( \mathbf{x} \) as input and returns a location \( y \in [0,1] \) for the facility. 
{For a facility at location \(y\), agent \(i\) need to travel \(\mid y-x_i\mid\).  Thus, \(\mid y-x_i\mid\) indicates the cost to it or in some contexts, \(1 - \mid y-x_i\mid\) indicates the utility to agent $i$. The most prominently studied welfare functions are computed
as follows.}
\begin{definition}[Utilitarian Welfare]
    For the agents located at \(\mathbf{x}\) and the facility located at \(y\), the \emph{Utilitarian Welfare} is $$W_{\text{Utilitarian}}(y,\mathbf{x}) = \sum_i (1 - \mid y - x_i \mid) $$
\end{definition}

\begin{definition}[Egalitarian Welfare]
    For the agents located at \(\mathbf{x}\) and the facility located at \(y\), the \emph{Egalitarian Welfare} is $$W_{\text{Egalitarian}}(y,\mathbf{x}) = \min_i (1 - \mid y - x_i \mid )$$
\end{definition}

\begin{definition}[Nash  Welfare]
    For the agents located at \(\mathbf{x}\) and the facility located at \(y\), the \emph{Nash Welfare} is $$W_{\text{Nash}}(y,\mathbf{x}) = \prod_i \left(1 -  \mid y - x_i \mid \right) $$
\end{definition}

Typically, the social planer aims to place the facility at a location $y$ that maximizes $W_{\text{Utilitarian}}$ or $W_{\text{Nash}}$ or $W_{\text{Egalitarian}}$. There are closed-form solutions for Utilitarianism and Egalitarianism. 

\subsection{Key Important Mechanisms}
\label{sec:key}
In facility location problems (FLP), different welfare optimization criteria lead to distinct placement strategies for the facility.

The solution that maximizes \textit{utilitarian welfare}—defined as the total sum of utilities—is the median of the agent locations. Formally, this position is given by \( x_{\lfloor n/2 \rfloor} \), where \( n \) represents the total number of agents. This placement has the additional advantage of being \textit{strategyproof}, meaning agents cannot benefit from misreporting their locations.

In contrast, the solution that maximizes \textit{egalitarian welfare} (focused on maximizing the minimum utility for any agent)—is the midpoint between the extreme agents. This solution can be expressed as \( \frac{x_1 + x_n}{2} \), where \( x_1 \) and \( x_n \) represent the positions of the agents at the two extremes.

Finally, the solution that maximizes \textit{Nash welfare}—a balance between utilitarian and egalitarian objectives—is more complex. The Nash welfare function is the product of individual utilities, and finding its maximization in FLP is known to be both difficult to compute and interpret in practice \cite{moulin2004fair}.

While these solutions maximize different welfare objectives, many interesting properties emerge from their comparative analysis. However, these properties have traditionally been studied separately for each welfare function. This paper proposes a unifying framework that allows for the study of these properties in a more general, abstract manner.

As stated previously (Section \ref{sec:intro}), there are scenarios where one must go beyond the three classical welfare functions. Rather than developing new solutions for each emerging welfare criterion, a more holistic approach can be adopted. Specifically, we want to study facility location as an abstract problem, agnostic to the specific welfare function, by focusing on common properties shared by many of these functions. One such approach involves the use of $p$-mean functions, which we explain in the next section.


\if 0
\sout{In this context, the utility of an individual agent \( i \), located at \( x_i \), depends on the distance between the agent's position and the facility's location \( y \). Specifically, we define the utility function \( \alpha(y - x_i) \), where \( \alpha \) is a function that measures how the utility decays as the distance between the facility and the agent increases. The sum of the individual utilities across all agents then gives the total welfare of the system.}\fi


\subsection{$p$-mean Welfare Functions}

Consider the facility location problem where the \( L_p \)-norm is used as the distance metric between agents and the facility. 
The solution \( y_{Pmean} \) to the facility location problem under the \( L_p \)-norm is defined as the facility location that minimizes the \( p \)-mean distance to all agents, given by:

\begin{equation}
    y_{Pmean} = \arg \min_{y \in [0,1]} \left( \sum_{i \in N} |y - x_i|^p \right)^{1/p}
\end{equation}

Since the \( p \)-th root is a monotonically increasing function, we can simplify the optimization problem to:

\begin{equation}
\label{eq:p_mean_opt}
    y_{Pmean} = \arg \min_{y \in [0,1]} \sum_{i \in N} |y - x_i|^p
\end{equation}

{Next section proposes a more general framework, \emph{\ourframework} -- \textbf{Facility Location Integrating Generalized, Holistic Theory of Welfare}.  \ourframework\ generalizes the concept of welfare functions and provides a unified approach to solving facility location problems. }

\section{A Unified Perspective}
\label{sec:flight_frmwrk}
 
We propose \ourframework\  and demonstrate how all well-studied welfare functions, including $p$-mean functions, can be incorporated into it. 
We show that the Nash Welfare function can also be integrated within the FLIGHT framework, thereby highlighting the versatility and generality of our approach in encompassing a wide range of welfare functions.

\subsection{FLIGHT Framework}

Utility for an agent at location $x_i$ when the facility is located at $y$ is a function of $y-x_i$. Let
the utility function for each agent be \( \alpha: \mathbb{R} \rightarrow \mathbb{R} \). Specifically, \( \alpha \) takes the distance from the facility as input and returns the corresponding utility for the agent as output. The function \( \alpha \) encapsulates how the agent’s utility diminishes with increasing distance from the facility\footnote{Note: $\alpha$ could be an asymmetric functions as well, meaning it lacks symmetry about the y-axis.}. 







Next, we define the \textbf{total welfare} \( W_{\alpha}(y, \mathbf{x}) \) as the aggregate of individual utilities across all agents. Formally, it is expressed as:

\[
W_{\alpha}(y, \mathbf{x}) = \sum_{i \in N} \alpha(y - x_i)
\]

where \( y \in [0,1] \) represents the location of the facility, and \( \mathbf{x} = (x_1, \dots, x_n) \) denotes the vector of agent locations.

Given this total welfare function, the social planner's goal is to determine a location that maximizes global welfare. We denote it as \( P_{\alpha}(\mathbf{x}) \). Formally, this can be expressed as:

\[
P_{\alpha}(\mathbf{x}) = \arg \max_{y \in [0,1]} W_{\alpha}(y, \mathbf{x})
\]

For Utilitarian welfare, as stated in Sec.~\ref{sec:key},  $P_{\alpha}(x)=x_{\frac{n}{2}}$ and for Egalitarian welfare, $P_{\alpha}(x)=\frac{x_1+x_n}{2}$.
{In the next section, we show that $p$-mean welfare functions are special cases of our framework.}

\subsection{Incorporating $p$-mean Welfare Functions into Our Framework}

In this section, we demonstrate that $p$-\textit{Mean utility functions} are fully accommodated by our framework. 
Since \textit{utilitarian welfare} and \textit{egalitarian welfare} are special cases of $p$-mean utility functions, which naturally fit within our general framework.


\subsubsection{Generalizing to $p$-Mean Utility Functions}
The idea is to align optimization from Eq~\ref{eq:p_mean_opt} with \ourframework, we can express it as a maximization problem as:

\begin{equation}
    y_{Pmean} = \arg \max_{y \in [0,1]} - \sum_{i \in N} |y - x_i|^p
\end{equation}


To incorporate $p$-mean utility functions into our framework, we define a utility function \( \alpha: \mathbb{R} \rightarrow \mathbb{R} \), where $\alpha(x) = -|x|^p$.
Using this definition, the total welfare function \( W_\alpha(y, \mathbf{x}) \) becomes:

\begin{equation}
    W_\alpha(y, \mathbf{x}) = \sum_{i \in N} \alpha(y - x_i) = \sum_{i \in N} -|y - x_i|^p
\end{equation}

\subsubsection{Special Cases: Utilitarian and Egalitarian Welfare}

Both \textit{utilitarian welfare} and \textit{egalitarian welfare} are special cases of the $p$-mean utility functions, fitting naturally within our framework. The utilitarian welfare function corresponds to the case where \( p = 1 \). Similarly, The egalitarian welfare function corresponds to the limiting case as \( p \rightarrow \infty \).

\subsection{Nash Welfare}

In this section, we demonstrate that the \textit{Nash welfare function} ~\cite{lam2023nash} is also fully compatible with our framework. 
To formalize this, let \( y_{Nash} \) denote the facility location that maximizes the Nash welfare. We express this as:

\begin{equation}
    y_{Nash} = \arg \max_{y \in [0,1]} \prod_{i \in N} (1 - |y - x_i|)
\end{equation}

 We can simplify this by applying the logarithmic transformation. Since the logarithmic function is monotonic, it preserves the location of the maximum. Therefore, we have:

\begin{equation}
    y_{Nash} = \arg \max_{y \in [0,1]} \log \left( \prod_{i \in N} (1 - |y - x_i|) \right)
\end{equation}

Or equivalently:
\begin{equation}
    y_{Nash} = \arg \max_{y \in [0,1]} \sum_{i \in N} \log(1 - |y - x_i|)
\end{equation}

 Thus, by defining the utility function as \( \alpha(x) = \log(1 - |x|) \), the Nash welfare problem is equivalent to maximizing the total welfare function \( W_\alpha(y, \mathbf{x}) \). 

It is worth noting that our framework naturally accommodates asymmetric welfare functions, an area that has been explored in a limited number of facility location studies ~\cite{Leitner2014,Drezner2001}. While the study of asymmetry in FLPs remains relatively sparse, our framework offers a convenient approach to incorporating such welfare functions.

Having established that the \( p \)-mean functions—along with the utilitarian and egalitarian welfare functions—and the Nash welfare function can be effectively incorporated into our \emph{FLIGHT} framework, we now turn our attention to studying the properties of this generalized welfare formulation. In the following section, we examine key structural properties of the generalized welfare function, with particular emphasis on concavity, which is a natural assumption in many practical contexts since utility typically decreases with increasing distance.

\noindent
We choose to use \emph{utility} over \emph{cost} in our exposition, primarily for conceptual clarity. Note that our framework, \ourframework, is inherently flexible and capable of unifying both cost and utility perspectives under the broader notion of \emph{agent single-peaked preferences}. Within this formulation, the $\alpha$-welfare function can be interpreted as an aggregation of these preferences, ensuring a cohesive and generalized approach to welfare optimization.

We proceed by proving several theorems related to these properties, thereby further elucidating the theoretical foundation of the FLIGHT framework.


\
\section{$\alpha$-Welfare: Properties and Computation}
\label{sec:alpha_welfare}

In the previous section, we demonstrated that a wide variety of existing welfare notions can be incorporated into our framework. Here, we delve into the general structural properties of $\alpha$-Welfare functions, focusing on how these properties relate to the computation and approximation of various welfare functions within the framework. Notably, many of these properties echo results found in the literature, thus highlighting the unifying power of our framework. Moreover, several proofs become simplified when viewed through the lens of the generalized framework. We have provided proof sketches wherever possible. The complete proofs are present in {Appendix~\ref{sec:proof_sec}} of this paper.


\subsection{Assumptions}
We begin by assuming that the utility function \( \alpha(x) \) is concave with respect to \( x \), which captures the phenomenon of diminishing returns as the distance between the facility and an agent increases. This assumption of concavity serves as the foundation for the theorems presented in the subsequent sections. Additionally, we assume that \( \alpha(x) \) attains its maximum at \( x = 0 \), reflecting the highest utility when the agent is located at the facility.

The following properties arise naturally from a fundamental assumption of concave utility functions. Note that these do not need to be imposed as design choices.

\subsection{Properties of $\alpha$-Welfare}

\begin{theorem}
\label{thm:concave}
The total welfare function $W_\alpha(y, \mathbf{x})$, is concave in \( y \).
\end{theorem}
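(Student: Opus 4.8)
The plan is to prove concavity of $W_\alpha(y,\mathbf{x})$ in $y$ by reducing the claim to the assumed concavity of the single-agent utility function $\alpha$. First I would recall that a sum of concave functions is concave, so it suffices to show that for each fixed agent location $x_i$, the map $y \mapsto \alpha(y - x_i)$ is concave on $[0,1]$. Once this is established for every $i \in N$, concavity of $W_\alpha(y,\mathbf{x}) = \sum_{i \in N} \alpha(y - x_i)$ follows immediately, since nonnegative (here, unit) sums preserve concavity.

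The key observation is that composing a concave function with an affine map preserves concavity. For fixed $x_i$, define $g_i(y) = y - x_i$, which is affine in $y$. By assumption $\alpha$ is concave, and the composition of a concave function with an affine function is concave; hence $y \mapsto \alpha(g_i(y)) = \alpha(y - x_i)$ is concave. Concretely, I would verify this from the definition: for any $y_1, y_2 \in [0,1]$ and any $\lambda \in [0,1]$,
\[
\alpha\bigl(\lambda y_1 + (1-\lambda) y_2 - x_i\bigr) = \alpha\bigl(\lambda(y_1 - x_i) + (1-\lambda)(y_2 - x_i)\bigr) \geq \lambda\,\alpha(y_1 - x_i) + (1-\lambda)\,\alpha(y_2 - x_i),
\]
where the inequality is exactly the concavity of $\alpha$ applied to the two points $y_1 - x_i$ and $y_2 - x_i$. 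This shows each summand is concave.

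Summing the per-agent inequality over $i \in N$ and using that the same convex combination $\lambda y_1 + (1-\lambda) y_2$ appears on the left for every agent, I would conclude
\[
W_\alpha\bigl(\lambda y_1 + (1-\lambda) y_2, \mathbf{x}\bigr) \geq \lambda\, W_\alpha(y_1, \mathbf{x}) + (1-\lambda)\, W_\alpha(y_2, \mathbf{x}),
\]
which is precisely concavity of $W_\alpha$ in $y$. I expect this proof to be largely routine, since it rests entirely on the two standard facts (affine precomposition preserves concavity, and sums of concave functions are concave) together with the concavity assumption on $\alpha$ established in the Assumptions subsection. The only point requiring mild care is the affine-precomposition step: one must note that although $\alpha$ is defined on all of $\mathbb{R}$, the relevant domain restriction to $y \in [0,1]$ is itself a convex set, so the convex combination $\lambda y_1 + (1-\lambda) y_2$ remains in $[0,1]$ and the argument is well-defined throughout. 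There is no genuine obstacle here; the main value of the result is foundational, as concavity underpins the tractable optimization and the structural theorems that follow.
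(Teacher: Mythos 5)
Your proof is correct and follows essentially the same route as the paper's: reduce to the fact that a finite sum of concave functions is concave. You are in fact slightly more careful than the paper, which takes the concavity of each summand $y \mapsto \alpha(y - x_i)$ for granted, whereas you justify it explicitly via affine precomposition and verify the defining inequality directly.
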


\textit{Proof}. The total welfare is defined as the sum of individual utility functions: 
\[
W_\alpha(y, \mathbf{x}) = \sum_{i \in N} \alpha(y - x_i)
\]
Since $\alpha(x)$ is concave, and the sum of {finitely many} concave functions is also concave, it follows that $W_\alpha(y, \mathbf{x})$ is concave in \( y \). 
\qed\\

The significance of Theorem \ref{thm:concave} lies in the fact that the concavity of the total welfare function \( W_\alpha(y, \mathbf{x}) \) implies it is \textit{single-peaked} with respect to \( y \), a property that is analogous to the behavior observed in Nash welfare functions~\cite{lam2023nash}. Furthermore, the structural properties established in Theorems \ref{thm:locinv}, \ref{thm:shift}, and \ref{thm:max_shift} exhibit similar characteristics to those studied in~\cite{lam2023nash}, reinforcing the parallels between our framework and Nash welfare-based approaches in facility location.
. Single-peakedness is crucial in optimization. It allows the use of efficient convex or concave optimization algorithms to locate the maximum welfare point, facilitating computational approaches to solving the facility location problem. Furthermore, the concavity guarantees that local maxima are also global maxima, simplifying the analysis and solution of the problem. {Our next Theorem \ref{thm:locinv} proves the location invariance property of \ourframework.}

\begin{theorem}
\label{thm:locinv}
Let \( \mathbf{x} = (x_1, x_2, \dots, x_n) \) and \( \mathbf{x'} = (x_1 + c, x_2 + c, \dots, x_n + c) \) be two location profiles, where \( c \in \mathbb{R} \) represents a constant shift. Then, the following holds:
\[
W_\alpha(y, \mathbf{x'}) = W_\alpha(y - c, \mathbf{x}),
\]
and consequently, 
\[
P_\alpha(\mathbf{x'}) = P_\alpha(\mathbf{x}) + c.
\]
\end{theorem}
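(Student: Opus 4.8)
The plan is to prove the two claims in sequence, establishing the welfare identity first and then deriving the location of the peak as a corollary. The key observation is that the welfare function is defined as a sum of translated copies of a single univariate function $\alpha$, so shifting all agents by a common constant $c$ is equivalent to shifting the argument of each individual utility term by $c$. This is a purely algebraic fact and does not require the concavity assumption; it is true for any $\alpha$.

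First I would write out the definition of $W_\alpha(y, \mathbf{x'})$ directly from the framework. Since $\mathbf{x'} = (x_1 + c, \dots, x_n + c)$, we have
\[
W_\alpha(y, \mathbf{x'}) = \sum_{i \in N} \alpha(y - (x_i + c)) = \sum_{i \in N} \alpha((y - c) - x_i).
\]
The right-hand side is, by definition, exactly $W_\alpha(y - c, \mathbf{x})$. This establishes the first identity in a single line; the only thing to be careful about is the regrouping $y - (x_i + c) = (y - c) - x_i$, which is just associativity of subtraction.

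Next I would derive the peak relation. By definition $P_\alpha(\mathbf{x'}) = \arg\max_{y \in [0,1]} W_\alpha(y, \mathbf{x'}) = \arg\max_{y} W_\alpha(y - c, \mathbf{x})$. The substitution $z = y - c$ turns this into maximizing $W_\alpha(z, \mathbf{x})$ over the shifted domain, whose maximizer is $z = P_\alpha(\mathbf{x})$ by definition, giving $y = P_\alpha(\mathbf{x}) + c$. The main obstacle — really the only subtlety — is the domain restriction $y \in [0,1]$: the argmax for $\mathbf{x'}$ ranges over $y \in [0,1]$, which corresponds to $z \in [-c, 1-c]$, not the original $[0,1]$. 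To state the peak identity cleanly one should either appeal to concavity (Theorem~\ref{thm:concave}), which guarantees the maximizer of the unconstrained problem is unique and that the constrained maximizer tracks the shift whenever it remains interior, or implicitly assume the facility domain translates with the agents (as the footnote about extending and translating the interval suggests). I would handle this by noting that the identity $W_\alpha(y,\mathbf{x'}) = W_\alpha(y-c,\mathbf{x})$ holds pointwise, so the entire welfare landscape is rigidly translated by $c$; therefore its maximizer translates by $c$ as well, and I would remark that the boundary issue is resolved by the translatability of the domain noted earlier.
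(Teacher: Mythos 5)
Your proof is correct and follows essentially the same route as the paper's: the one-line algebraic identity $\alpha(y-(x_i+c)) = \alpha((y-c)-x_i)$ summed over agents, followed by translating the argmax. In fact you go slightly beyond the paper by explicitly flagging and resolving the domain-restriction subtlety (the argmax over $y\in[0,1]$ corresponding to $z\in[-c,1-c]$), which the paper's proof silently ignores; your resolution via the rigid translation of the welfare landscape and the translatability of the interval is the right fix.
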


\textit{Proof.} The key idea behind this result is that shifting all the agents locations by a constant \( c \) results in a corresponding shift in the facility location by \( c \), without affecting the total welfare function. 

To see this, consider the welfare function \( W_\alpha(y, \mathbf{x}) = \sum_{i \in N} \alpha(y - x_i) \), where \( \alpha(z) \) is the utility function. When the agent profile \( \mathbf{x} \) is shifted by a constant \( c \), the welfare function for the shifted profile becomes:
\[
W_\alpha(y, \mathbf{x'}) = \sum_{i \in N} \alpha((y - c) - x_i).
\]
This is equivalent to the original welfare function with the facility location adjusted by \( c \), i.e., \( W_\alpha(y, \mathbf{x'}) = W_\alpha(y - c, \mathbf{x}) \).

Thus, the optimal facility location for the shifted profile, \( P_\alpha(\mathbf{x'}) \), is simply the original location shifted by \( c \), i.e., \( P_\alpha(\mathbf{x'}) = P_\alpha(\mathbf{x}) + c \). This completes the proof.
\qed


Theorem \ref{thm:shift} shows how the movement of a single agent $x_i$ by a constant affects the peak $P_\alpha$ {with utility function $\alpha$}.

\begin{theorem}
\label{thm:shift}
Let $\mathbf{x} = (x_1, \dots, x_n)$ be the agent location profile. If agent $x_i$ is shifted left by a constant $c \in (0, x_i]$, resulting in a new profile $\mathbf{x}' = (x_1, \dots, x_i - c, \dots, x_n)$, then:
\[
P_\alpha(\mathbf{x}') \leq P_\alpha(\mathbf{x})
\]
\end{theorem}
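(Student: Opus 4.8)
The plan is to prove this by monotone comparative statics applied to the first-order condition, exploiting the concavity of $W_\alpha(\cdot,\mathbf{x})$ from Theorem~\ref{thm:concave}. Since $\alpha$ is concave, its right derivative $\alpha'$ exists everywhere and is a non-increasing function; concavity of $W_\alpha$ then means that the aggregate marginal welfare $g(y,\mathbf{x}) := \sum_{i\in N}\alpha'(y-x_i)$ is non-increasing in $y$. Writing $y^\star = P_\alpha(\mathbf{x})$, the first step is to recall the subgradient characterization of a maximizer of a concave function: at an (interior) peak the right derivative is non-positive, so $g(y^\star,\mathbf{x}) \le 0$.

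The heart of the argument is a single computation evaluating the shifted marginal welfare at the \emph{old} peak $y^\star$. Only the $i$-th summand changes when we replace $x_i$ by $x_i-c$, so
\[
g(y^\star,\mathbf{x}') - g(y^\star,\mathbf{x}) = \alpha'\big(y^\star - (x_i-c)\big) - \alpha'(y^\star - x_i).
\]
Because $c>0$ we have $y^\star-(x_i-c) = (y^\star-x_i)+c > y^\star - x_i$, and since $\alpha'$ is non-increasing the right-hand side is $\le 0$. Combining this with $g(y^\star,\mathbf{x})\le 0$ yields $g(y^\star,\mathbf{x}')\le 0$.

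The final step transfers this sign information back to the location of the peak. Since $g(\cdot,\mathbf{x}')$ is non-increasing and already non-positive at $y^\star$, it stays non-positive for every $y \ge y^\star$; hence $W_\alpha(\cdot,\mathbf{x}')$ is non-increasing on $[y^\star,1]$, so no maximizer can lie strictly to the right of $y^\star$. This gives $P_\alpha(\mathbf{x}') \le y^\star = P_\alpha(\mathbf{x})$, as claimed. Note that the hypothesis $c\in(0,x_i]$ is used only to guarantee $x_i-c\ge 0$, so that $\mathbf{x}'$ remains a valid profile in $[0,1]^n$; the comparative-statics inequality itself needs only $c>0$.

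I expect the difficulties to be technical rather than conceptual. First, $\alpha$ need not be differentiable (the utilitarian case $\alpha(x)=1-|x|$ has a kink at $0$), so the argument must be phrased throughout with one-sided derivatives or subdifferentials; the saving grace is that concavity guarantees a monotone right derivative and a clean subgradient characterization of maximizers, so each inequality above goes through verbatim. Second, the maximizer may fail to be unique (again the utilitarian case with an even number of agents), in which case $P_\alpha$ must be fixed by a tie-breaking convention; taking, say, the largest maximizer, one checks that the strict decrease of $W_\alpha(\cdot,\mathbf{x})$ immediately to the right of $y^\star$ carries over to $\mathbf{x}'$, so the inequality survives. When $\alpha$ is \emph{strictly} concave the peaks are unique and these caveats disappear. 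The boundary cases $y^\star\in\{0,1\}$ are covered by the same sign argument, since lowering the marginal welfare at $y^\star$ can only push the constrained optimum leftward or leave it fixed.
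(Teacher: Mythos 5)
Your proof is correct and takes essentially the same route as the paper's: both compare the (one-sided) derivative of $W_\alpha(\cdot,\mathbf{x}')$ to that of $W_\alpha(\cdot,\mathbf{x})$, use concavity of $\alpha$ to show the extra term $\alpha'\bigl(y-(x_i-c)\bigr)-\alpha'(y-x_i)$ is non-positive, and conclude that the shifted welfare is non-increasing to the right of the old peak, so the maximizer cannot move right. The only difference is added rigor on your side---subgradients in place of derivatives and explicit handling of kinks and non-unique maximizers---where the paper tacitly assumes differentiability and a strictly negative derivative on $(P_\alpha(\mathbf{x}),1)$.
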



\textit{Proof}. Since \( W_\alpha(y, \mathbf{x'}) \) is concave and single-peaked, to prove that \( P_\alpha(\mathbf{x}') \leq P_\alpha(\mathbf{x}) \), it suffices to show that \( W_\alpha(y, \mathbf{x'}) \) 
is strictly decreasing in the interval \( (P_\alpha(\mathbf{x}), 1) \). This is because \( P_\alpha(\mathbf{x}) \) is the maximum of \( W_\alpha(y, \mathbf{x}) \), and proving the function decreases beyond this point implies that the new maximum \( P_\alpha(\mathbf{x}') \) must occur at a lower value.

We begin by expressing the total welfare for the perturbed location profile \( \mathbf{x}' \):

\[
W_\alpha(y, \mathbf{x}') = \alpha(y - x_1) + \dots + \alpha(y - (x_i - c)) + \dots + \alpha(y - x_n)
\]

\noindent This can be rewritten as:

\[
W_\alpha(y, \mathbf{x}') = W_\alpha(y, \mathbf{x}) + \alpha(y - (x_i - c)) - \alpha(y - x_i)
\]

\noindent Next, we differentiate this expression with respect to \( y \):

\[
\frac{dW_\alpha(y, \mathbf{x}')}{dy} = \frac{dW_\alpha(y, \mathbf{x})}{dy} + \frac{d\alpha}{dy}\bigg|_{y - (x_i - c)} - \frac{d\alpha}{dy}\bigg|_{y - x_i}
\]

\noindent Now, consider the terms on the right-hand side:

1. Since \( W_\alpha(y, \mathbf{x}) \) is concave and \( P_\alpha(\mathbf{x}) \) is its maximum, we know that \( \frac{dW_\alpha(y, \mathbf{x})}{dy} < 0 \) for \( y \in (P_\alpha(\mathbf{x}), 1) \).
   
2. Additionally, because \( \alpha(x) \) is concave, we have \( \frac{d\alpha}{dy} \big|_{y - (x_i - c)} \leq \frac{d\alpha}{dy} \big|_{y - x_i} \). This follows from the fact that the derivative of a concave function decreases as the input increases.

3. As a result, \( \frac{d\alpha}{dy} \big|_{y - (x_i - c)} - \frac{d\alpha}{dy} \big|_{y - x_i} \leq 0 \).

Thus, the total derivative \( \frac{dW_\alpha(y, \mathbf{x'})}{dy} \) remains negative in the interval \( (P_\alpha(\mathbf{x}), 1) \). Since the function is decreasing beyond \( P_\alpha(\mathbf{x}) \), it follows that:
$P_\alpha(\mathbf{x}') \leq P_\alpha(\mathbf{x})$
\qed

It is important to observe that when an agent moves to the right, i.e., \( x_i \rightarrow x_i + c \) with \( c \geq 0 \), we can prove by symmetry arguments that \( P_\alpha(\mathbf{x}) \leq P_\alpha(\mathbf{x'}) \), where \( \mathbf{x'} \) represents the updated location profile after the shift. 

This result follows by considering \( \mathbf{x'} \) as the initial location profile and \( \mathbf{x} \) as the deviated profile. Then, by applying Theorem 3, which states that the optimal facility location shifts in the direction of the agent's movement, we conclude that \( P_\alpha(\mathbf{x}) \leq P_\alpha(\mathbf{x'}) \). 

The importance of this theorem lies in the fact that it establishes a directional monotonicity property. Specifically, assuming the positions of all other agents remain fixed, if a single agent shifts to the right, the peak \( P_{\alpha} \) will not move to the left, and similarly, if the agent shifts to the left, the peak will not move to the right. 

We now address a broader question: how does the optimal facility location \( P_\alpha \) change when all agents move from one location profile \( \mathbf{x} \) to a new profile \( \mathbf{x}' \)? Specifically, we aim to understand the relationship between the shifts in individual agent positions and the resulting change in the welfare-maximizing facility location.

\begin{theorem}
\label{thm:max_shift}
For any two agent location profiles $\mathbf{x} = (x_1, x_2 \dots, x_n)$ and $\mathbf{x}' = (x_1', x_2', \dots, x_n')$, the following inequality holds:
\[
|P_\alpha(\mathbf{x}) - P_\alpha(\mathbf{x}')| \leq \max_{i \in [n]} |x_i - x_i'|
\]
\end{theorem}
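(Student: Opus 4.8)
The plan is to sandwich the target profile $\mathbf{x}'$ between two uniformly shifted copies of $\mathbf{x}$ and then exploit the directional monotonicity of the peak. I would set $M = \max_{i \in [n]} |x_i - x_i'|$, so that $x_i - M \le x_i' \le x_i + M$ for every agent $i$. Introduce the two uniformly shifted profiles $\mathbf{x}^{+} = (x_1 + M, \dots, x_n + M)$ and $\mathbf{x}^{-} = (x_1 - M, \dots, x_n - M)$. By the location-invariance result (Theorem~\ref{thm:locinv}), these shift the peak by exactly $M$: namely $P_\alpha(\mathbf{x}^{+}) = P_\alpha(\mathbf{x}) + M$ and $P_\alpha(\mathbf{x}^{-}) = P_\alpha(\mathbf{x}) - M$.

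Next I would obtain the upper bound by morphing $\mathbf{x}^{+}$ into $\mathbf{x}'$ one coordinate at a time. Since $x_i' \le x_i + M$, each agent is moved weakly to the left when passing from $\mathbf{x}^{+}$ to $\mathbf{x}'$. Applying the single-agent left-shift property (Theorem~\ref{thm:shift}) to each coordinate in turn, the peak can only decrease at every step, giving $P_\alpha(\mathbf{x}') \le P_\alpha(\mathbf{x}^{+}) = P_\alpha(\mathbf{x}) + M$. Symmetrically, because $x_i' \ge x_i - M$, morphing $\mathbf{x}^{-}$ into $\mathbf{x}'$ moves each agent weakly to the right; invoking the rightward counterpart of Theorem~\ref{thm:shift} (noted in the remark following it) at each coordinate gives $P_\alpha(\mathbf{x}') \ge P_\alpha(\mathbf{x}^{-}) = P_\alpha(\mathbf{x}) - M$. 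Combining the two chains yields $P_\alpha(\mathbf{x}) - M \le P_\alpha(\mathbf{x}') \le P_\alpha(\mathbf{x}) + M$, which is precisely the claimed inequality $|P_\alpha(\mathbf{x}) - P_\alpha(\mathbf{x}')| \le M$.

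The step I expect to be the main obstacle is justifying the repeated, coordinate-by-coordinate application of Theorem~\ref{thm:shift}, where two points need care. First, Theorem~\ref{thm:shift} is stated under the sorted convention $x_1 \le \dots \le x_n$ and for a single shift, whereas the intermediate profiles generated while moving agents one at a time need not remain sorted; I would note that the differentiation argument underlying Theorem~\ref{thm:shift} uses only the monotonicity of $\alpha'$ and the single-peakedness of $W_\alpha$, so it is insensitive to the relative order of the agents and composes across successive single-agent moves. Second, the shifted profiles $\mathbf{x}^{\pm}$ and the points $P_\alpha(\mathbf{x}) \pm M$ may leave the interval $[0,1]$; I would handle this by working over the translation-extended domain permitted by the setup (as indicated by the footnote that $[0,1]$ may be any closed interval), since $\alpha$, and hence $W_\alpha$, is defined on all of $\mathbb{R}$ and the argmax relations invoked are unaffected by the choice of ambient interval.
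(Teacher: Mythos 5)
Your proposal is correct and takes essentially the same route as the paper's own proof: sandwich $\mathbf{x}'$ between the uniform shifts $\mathbf{x}\pm M$, use Theorem~\ref{thm:locinv} to pin the peaks of the shifted profiles at $P_\alpha(\mathbf{x})\pm M$, and invoke the directional monotonicity of Theorem~\ref{thm:shift} to conclude. If anything, you are more careful than the paper, which applies Theorem~\ref{thm:shift} in a single stroke to a simultaneous move of all agents, whereas you spell out the coordinate-by-coordinate telescoping that this actually requires, along with the sortedness and domain-extension caveats.
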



\textit{Proof}.\footnote{Note that (except for the \ourframework\ setting) this proof is almost exactly equivalent to the one in \cite{lam2023nash}. }
Define two new location profiles, \( \mathbf{x}_{+c} \) and \( \mathbf{x}_{-c} \), as follows:
\[
\mathbf{x}_{+c} \triangleq (x_1 + c, x_2 + c, \dots, x_n + c)
\]
\[
\mathbf{x}_{-c} \triangleq (x_1 - c, x_2 - c, \dots, x_n - c)
\]

By Theorem~\ref{thm:shift}, we know:
\[
P_\alpha(\mathbf{x}_{-c}) \leq P_\alpha(\mathbf{x}') \leq P_\alpha(\mathbf{x}_{+c})
\]

Additionally, by Theorem~\ref{thm:locinv}, we have:
\[
P_\alpha(\mathbf{x}) - c \leq P_\alpha(\mathbf{x}') \leq P_\alpha(\mathbf{x}) + c
\]

This implies:
\[
-c \leq P_\alpha(\mathbf{x}') - P_\alpha(\mathbf{x}) \leq +c
\]

Therefore, we conclude:
\[
|P_\alpha(\mathbf{x}') - P_\alpha(\mathbf{x})| \leq c = \max_{i \in N} |x_i' - x_i|
\]

\qed
\subsection{Approximation Bounds: Comparison to Other Welfare Metrics}

We now turn to the question of how well different welfare functions approximate one another under this framework. 


\begin{theorem}
Let \( \alpha \) be a utility function with an upper Lipschitz constant \( \lambda_\alpha^u \) and a lower Lipschitz constant \( \lambda_\alpha^d \). Additionally, assume that \( \alpha(x) > 0 \) for all \( x \in [-1, 1] \), even though agent locations are restricted to $[0,1]$. Define:
\[
D_\alpha = \min\left\{ \alpha(0) + (n-1)\alpha(-1), \alpha(0) + (n-1)\alpha(1) \right\}.
\]
Then, for all \( y \in [0, 1] \), the following inequality holds:
\[
e^{\frac{\lambda_\alpha^d (P_\alpha(\mathbf{x}) - y)}{ \alpha(0)}} 
\leq 
\frac{W_\alpha(P_\alpha(\mathbf{x}), \mathbf{x})}{W_\alpha(y, \mathbf{x})} 
\leq 
e^{\frac{n \lambda_\alpha^u (P_\alpha(\mathbf{x}) - y)}{D_\alpha}}.
\]
\end{theorem}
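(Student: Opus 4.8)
The plan is to pass to logarithms and bound the additive gap $\log W_\alpha(P_\alpha(\mathbf{x}),\mathbf{x}) - \log W_\alpha(y,\mathbf{x})$ from both sides, since exponentiating recovers the claimed two-sided ratio bound. Write $p = P_\alpha(\mathbf{x})$ and assume without loss of generality that $y \le p$; the case $y \ge p$ is identical after integrating over $[p,y]$, where $W_\alpha$ is nonincreasing, and it replaces $P_\alpha(\mathbf{x})-y$ by $|P_\alpha(\mathbf{x})-y|$ (so the signed factor in the statement is really this absolute value, since the ratio is always at least $1$ while $P_\alpha(\mathbf{x})-y$ can be negative). Because $W_\alpha$ is concave with peak at $p$ (Theorem~\ref{thm:concave}), it is nondecreasing on $[y,p]$, so I can write the gap as $\int_y^p \frac{W_\alpha'(t)}{W_\alpha(t)}\,dt$; the whole problem then reduces to bounding the logarithmic derivative $W_\alpha'(t)/W_\alpha(t)$ above and below on $[y,p]$.

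For the upper bound I would control numerator and denominator separately. The numerator satisfies $W_\alpha'(t) = \sum_{i} \alpha'(t-x_i) \le n\lambda_\alpha^u$, since each summand is bounded by the upper Lipschitz slope; equivalently, a telescoping estimate gives $W_\alpha(p,\mathbf{x})-W_\alpha(y,\mathbf{x})\le n\lambda_\alpha^u(p-y)$. The denominator is bounded below by $D_\alpha$: evaluating the welfare at the extreme agents gives $W_\alpha(x_1,\mathbf{x}) \ge \alpha(0)+(n-1)\alpha(-1)$ and $W_\alpha(x_n,\mathbf{x}) \ge \alpha(0)+(n-1)\alpha(1)$, where the $\alpha(0)$ term is the agent sitting on the facility and the remaining $n-1$ terms use monotonicity of the concave $\alpha$ on $[-1,0]$ and $[0,1]$ together with $t-x_i\in[-1,1]$; since $p$ maximizes welfare, $W_\alpha(p,\mathbf{x}) \ge \max\{W_\alpha(x_1,\mathbf{x}),W_\alpha(x_n,\mathbf{x})\} \ge D_\alpha$. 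Feeding $W_\alpha'(t)\le n\lambda_\alpha^u$ and $W_\alpha(t)\ge D_\alpha$ into the integral yields $\log\!\big(W_\alpha(p,\mathbf{x})/W_\alpha(y,\mathbf{x})\big) \le \frac{n\lambda_\alpha^u(p-y)}{D_\alpha}$, the right-hand inequality.

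For the lower bound I would mirror the argument, using the lower Lipschitz slope $\lambda_\alpha^d$ to bound the numerator from below and the per-agent maximum $\alpha(0)$ (through $W_\alpha(t)\le n\alpha(0)$, or via a single dominant term) to bound the denominator from above, then applying $1+u \ge e^{u/(1+u)}$ in place of $1+u\le e^{u}$ to convert the additive gap back into the exponential form $e^{\lambda_\alpha^d(p-y)/\alpha(0)}$.

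The main obstacle is the denominator control together with the degeneracy of the logarithmic derivative near the peak. The quantity $D_\alpha$ is a lower bound on the \emph{optimal} welfare $W_\alpha(p,\mathbf{x})$, not on $W_\alpha(t,\mathbf{x})$ at every $t\in[0,1]$, so the step ``$W_\alpha(t)\ge D_\alpha$ on the whole integration path'' must be justified carefully—it is cleanest when the path stays on the high-welfare side of the peak, or by re-deriving the estimate from $W_\alpha(p,\mathbf{x})\ge D_\alpha$ rather than from a pointwise bound. Symmetrically, the lower Lipschitz estimate $W_\alpha'(t)\ge\lambda_\alpha^d$ cannot persist up to $p$, where $W_\alpha'(p)=0$; reconciling this with the claimed constant $\lambda_\alpha^d/\alpha(0)$, and keeping the per-agent versus aggregate Lipschitz bookkeeping consistent with the factors of $n$ that appear on one side of the statement but not the other, is the step I expect to demand the most care and may force a secant (global) formulation of the two Lipschitz conditions rather than a pointwise-derivative one.
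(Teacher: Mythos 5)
Your plan follows the same route as the paper's own proof: pass to logarithms, bound the change in $\ln W_\alpha$ by a Lipschitz estimate on $\alpha$ summed over agents (the $n\lambda_\alpha^u$ factor) divided by a bound on the size of $W_\alpha$, and exponentiate. The paper implements this as a product of two secant difference quotients,
\[
\ln\frac{W_\alpha(P_\alpha(\mathbf{x}),\mathbf{x})}{W_\alpha(y,\mathbf{x})}
=\frac{\ln W_\alpha(P_\alpha(\mathbf{x}),\mathbf{x})-\ln W_\alpha(y,\mathbf{x})}{W_\alpha(P_\alpha(\mathbf{x}),\mathbf{x})-W_\alpha(y,\mathbf{x})}
\cdot\frac{W_\alpha(P_\alpha(\mathbf{x}),\mathbf{x})-W_\alpha(y,\mathbf{x})}{P_\alpha(\mathbf{x})-y}
\cdot\bigl(P_\alpha(\mathbf{x})-y\bigr),
\]
bounding the first factor in $\bigl[1/\max_t W_\alpha,\,1/\min_t W_\alpha\bigr]$ via the Lipschitz constants of $\ln$ and the second in $[n\lambda_\alpha^d,\,n\lambda_\alpha^u]$, whereas you integrate the logarithmic derivative $W_\alpha'/W_\alpha$; these are cosmetically different versions of the same estimate. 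Your closing instinct that a secant (global) formulation is needed is exactly what the paper does, and it also dissolves your worry about $W_\alpha'(P_\alpha(\mathbf{x}))=0$: since $\lambda_\alpha^d\le 0\le\lambda_\alpha^u$ for a function peaked at $0$, the secant bounds hold with the signed factor $(P_\alpha(\mathbf{x})-y)$ in both cases, and the $n$'s cancel in the lower bound exactly as in your $n\lambda_\alpha^d/(n\alpha(0))$ computation.

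The obstacle you flag on the denominator is genuine, and it is precisely where the paper's proof is weakest. The paper argues $\min_{t\in[0,1]}W_\alpha(t,\mathbf{x})=\min\{W_\alpha(0,\mathbf{x}),W_\alpha(1,\mathbf{x})\}$ (correct, by concavity) and then asserts $W_\alpha(0,\mathbf{x})\ge \alpha(0)+(n-1)\alpha(-1)$ and $W_\alpha(1,\mathbf{x})\ge \alpha(0)+(n-1)\alpha(1)$ under a single ``w.l.o.g.\ $x_1=0$''; the normalization delivers only the first of these (the second would need an agent at $1$), so the pointwise claim $W_\alpha(t,\mathbf{x})\ge D_\alpha$ is not established. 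Your variant, evaluating at the extreme agents to get $W_\alpha(x_1,\mathbf{x})\ge\alpha(0)+(n-1)\alpha(-1)$ and $W_\alpha(x_n,\mathbf{x})\ge\alpha(0)+(n-1)\alpha(1)$, is the sound version: combined with concavity it gives $W_\alpha(t,\mathbf{x})\ge D_\alpha$ for all $t\in[x_1,x_n]$, and since $P_\alpha(\mathbf{x})\in[x_1,x_n]$, the whole integration path stays there whenever $y$ lies in the agents' span --- so your approach proves the theorem on $[x_1,x_n]$. Outside the span the bound genuinely fails, not just the proof: with $n=2$, both agents at $1$, $\alpha(z)=1-0.9|z|$ and $y=0$, one has $W_\alpha(P_\alpha(\mathbf{x}),\mathbf{x})/W_\alpha(0,\mathbf{x})=10$ while $e^{n\lambda_\alpha^u/D_\alpha}=e^{1.8/1.1}\approx 5.1$; hence ``$W_\alpha(t)\ge D_\alpha$ along the path'' cannot be repaired for all $y\in[0,1]$, and the constant there must be weakened to $\min\{W_\alpha(0,\mathbf{x}),W_\alpha(1,\mathbf{x})\}$. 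In short: same approach as the paper, your $n$-bookkeeping and sign handling are right, and the step you predicted would demand the most care is exactly the one the paper itself does not close.
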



\textit{Proof sketch.} To establish the desired bounds, we leverage the Lipschitz properties of the logarithmic function \( \ln(x) \), as well as the Lipschitz continuity of \( \alpha(x) \), to derive a bound on \( \ln\left(\frac{W_\alpha(P_\alpha(\mathbf{x}), \mathbf{x})}{W_\alpha(y, \mathbf{x})}\right) \).

The total welfare function \( W_\alpha(y, \mathbf{x}) \) is the sum of individual utilities based on the location of the facility at \( y \). By applying the Lipschitz properties of \( \alpha(x) \), we can bound the ratio of welfare functions by exponentiating the bound on their logarithmic difference. Formally, we write:
\[
\ln\left(\frac{W_\alpha(P_\alpha(\mathbf{x}), \mathbf{x})}{W_\alpha(y, \mathbf{x})}\right) = \ln(W_\alpha(P_\alpha(\mathbf{x}), \mathbf{x})) - \ln(W_\alpha(y, \mathbf{x})).
\]
Using the Lipschitz property of \( \ln(x) \) and the fact that \( \alpha(x) \) is concave with its maximum at \( x = 0 \) 
, we can apply bounds on this difference. Specifically, since \( \alpha(x) \) is Lipschitz continuous with upper and lower bounds given by \( \lambda_\alpha^u \) and \( \lambda_\alpha^d \), we obtain the bounds for the ratio of the welfare functions.

Finally, applying the exponent to both sides of the inequality provides the result, where the upper and lower bounds depend on the constants \( \lambda_\alpha^u \), \( \lambda_\alpha^d \), and the behavior of \( \alpha(x) \) at the extremes of its domain.
\qed\\

Given that the exponent is a rational function with polynomials of the same degree in both the numerator and the denominator, we can conclude (as we prove in Lemma 6) that as \( n \to \infty \), the exponent is asymptotically bounded by a constant. Furthermore, we can assert that the ratio remains sub-exponential even for small values of \( n \).


\begin{lemma}
\label{lem:limit}
As the number of agents $n$ increases, the upper bound on the approximation ratio converges to a constant i.e. at the most $e^{\frac{\lambda_\alpha^u}{\min\{\alpha(-1), \alpha(1)\}}}$.
\end{lemma}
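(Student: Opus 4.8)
The plan is to analyze the upper bound from Theorem 5 directly and compute its limit as $n \to \infty$, relying only on elementary asymptotics of a ratio of linear functions of $n$. First I would fix attention on the exponent
\[
\frac{n \lambda_\alpha^u (P_\alpha(\mathbf{x}) - y)}{D_\alpha},
\]
and observe that since both $P_\alpha(\mathbf{x})$ and $y$ lie in $[0,1]$, the displacement satisfies $P_\alpha(\mathbf{x}) - y \leq 1$. Because $\lambda_\alpha^u > 0$ and $D_\alpha > 0$ (the latter following from the assumption $\alpha(x) > 0$ on $[-1,1]$), the exponent is maximized over $y$ when $P_\alpha(\mathbf{x}) - y = 1$, so the worst-case upper bound on the approximation ratio is $e^{n \lambda_\alpha^u / D_\alpha}$. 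It therefore suffices to study the limiting behavior of this single scalar.

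Next I would rewrite $D_\alpha$ in a form that exposes its linear growth in $n$. Since $n - 1 > 0$, the positive factor distributes through the minimum, giving
\[
D_\alpha = \min\{\alpha(0) + (n-1)\alpha(-1),\, \alpha(0) + (n-1)\alpha(1)\} = \alpha(0) + (n-1)\min\{\alpha(-1), \alpha(1)\}.
\]
Substituting this into the exponent yields
\[
\frac{n \lambda_\alpha^u}{\alpha(0) + (n-1)\min\{\alpha(-1), \alpha(1)\}},
\]
a ratio of two affine functions of $n$ whose leading coefficients are $\lambda_\alpha^u$ and $\min\{\alpha(-1), \alpha(1)\}$ respectively.

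Finally I would take the limit. Dividing numerator and denominator by $n$ and letting $n \to \infty$, the bounded contributions from $\alpha(0)$ and the $-\min\{\alpha(-1),\alpha(1)\}$ offset wash out, leaving
\[
\lim_{n \to \infty} \frac{n \lambda_\alpha^u}{\alpha(0) + (n-1)\min\{\alpha(-1), \alpha(1)\}} = \frac{\lambda_\alpha^u}{\min\{\alpha(-1), \alpha(1)\}}.
\]
By continuity of the exponential map, the worst-case upper bound converges to $e^{\lambda_\alpha^u / \min\{\alpha(-1), \alpha(1)\}}$, which is exactly the claimed constant. The computation is essentially routine; the only point requiring care — and hence the main (mild) obstacle — is justifying the distribution of the factor $(n-1)$ through the minimum and confirming that $\min\{\alpha(-1),\alpha(1)\} > 0$. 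This positivity is precisely what guarantees the denominator grows linearly rather than remaining bounded, ensuring the ratio tends to a finite constant instead of diverging to infinity.
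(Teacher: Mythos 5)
Your proposal is correct and follows essentially the same route as the paper's proof: both take the limit of the exponent $n\lambda_\alpha^u(P_\alpha(\mathbf{x})-y)/D_\alpha$ using the linear growth of $D_\alpha$ in $n$ and invoke $P_\alpha(\mathbf{x})-y\le 1$ to reach the constant $e^{\lambda_\alpha^u/\min\{\alpha(-1),\alpha(1)\}}$. Your version is marginally tidier in that you justify the asymptotics by explicitly distributing $(n-1)$ through the minimum to write $D_\alpha=\alpha(0)+(n-1)\min\{\alpha(-1),\alpha(1)\}$, where the paper only states $D_\alpha\sim(n-1)\min\{\alpha(-1),\alpha(1)\}$, but this is a refinement of the same argument rather than a different one.
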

\textit{Proof.}
From the previous theorem, the upper bound on the approximation ratio is given by:
\[
\frac{W_\alpha(P_\alpha(\mathbf{x}), \mathbf{x})}{W_\alpha(y, \mathbf{x})} \leq e^{\frac{n \lambda_\alpha^u (P_\alpha(\mathbf{x}) - y)}{D_\alpha}}.
\]
As \( n \to \infty \), the denominator behaves as\footnote{Note that $\alpha(x) > 0~\forall x \in [-1,1] $.}:
\[
D_\alpha \sim (n-1) \min\{\alpha(-1), \alpha(1)\}.
\]
Thus, the approximation ratio becomes:
\[
\lim_{n \to \infty} e^{\frac{n \lambda_\alpha^u (P_\alpha(\mathbf{x}) - y)}{(n-1) \min\{\alpha(-1), \alpha(1)\}}} = e^{\frac{\lambda_\alpha^u (P_\alpha(\mathbf{x}) - y)}{\min\{\alpha(-1), \alpha(1)\}}}.
\]
However, since $P_\alpha(\mathbf{x}) - y \le 1$, we have:
\[
\lim_{n \to \infty} e^{\frac{n \lambda_\alpha^u (P_\alpha(\mathbf{x}) - y)}{(n-1) \min\{\alpha(-1), \alpha(1)\}}} \le e^{\frac{\lambda_\alpha^u}{\min\{\alpha(-1), \alpha(1)\}}}.
\]

\begin{lemma}
Let $\alpha , \beta$ be two utility functions with corresponding welfares $W_\beta(y, \mathbf{x}) , W_\alpha(y, \mathbf{x})$ and maximizers $P_\beta(\mathbf{x}) , P_\alpha(\mathbf{x})$. Then, \textbf{Theorem 5} yields the approximation ratio between utility functions $\alpha$ and $\beta$:
\[
\frac{W_\alpha(P_\alpha(\mathbf{x}), \mathbf{x})}{W_\alpha(P_\beta(\mathbf{x}), \mathbf{x})} \leq e^{\frac{n \lambda_\alpha (P_\alpha(\mathbf{x}) - P_\beta(\mathbf{x}))}{ D_\alpha}} \leq e^{\frac{n \lambda_\alpha }{ D_\alpha}}
\]
\noindent By Lemma~\ref{lem:limit}, as \( n \to \infty \), this approximation ratio becomes constant. 
\end{lemma}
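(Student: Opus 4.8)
The plan is to obtain the claimed chain of inequalities as a direct corollary of Theorem 5 via a single well-chosen substitution, followed by a crude but sufficient bound on the exponent. First I would note that the upper bound in Theorem 5 holds for \emph{every} $y \in [0,1]$, and in particular at $y = P_\beta(\mathbf{x})$. Since $P_\beta(\mathbf{x}) = \arg\max_{y \in [0,1]} W_\beta(y, \mathbf{x})$ is itself a point of the feasible interval, the substitution is legitimate, and plugging it into the right-hand inequality of Theorem 5 immediately yields
\[
\frac{W_\alpha(P_\alpha(\mathbf{x}), \mathbf{x})}{W_\alpha(P_\beta(\mathbf{x}), \mathbf{x})} \leq e^{\frac{n \lambda_\alpha (P_\alpha(\mathbf{x}) - P_\beta(\mathbf{x}))}{D_\alpha}},
\]
which is the first of the two stated inequalities (here $\lambda_\alpha$ denotes the upper Lipschitz constant $\lambda_\alpha^u$ appearing in Theorem 5).

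For the second inequality I would bound the exponent from above. Because both maximizers lie in the feasible interval, $P_\alpha(\mathbf{x}), P_\beta(\mathbf{x}) \in [0,1]$, we have $P_\alpha(\mathbf{x}) - P_\beta(\mathbf{x}) \leq 1$. Since $\lambda_\alpha > 0$ and $D_\alpha > 0$ (recall the assumption $\alpha > 0$ on $[-1,1]$, which forces $D_\alpha > 0$), the map $t \mapsto e^{n\lambda_\alpha t / D_\alpha}$ is monotonically increasing, so replacing $P_\alpha(\mathbf{x}) - P_\beta(\mathbf{x})$ by its upper bound $1$ gives
\[
e^{\frac{n \lambda_\alpha (P_\alpha(\mathbf{x}) - P_\beta(\mathbf{x}))}{D_\alpha}} \leq e^{\frac{n \lambda_\alpha}{D_\alpha}},
\]
completing the chain. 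The asymptotic claim then follows by invoking Lemma~\ref{lem:limit}: since $D_\alpha \sim (n-1)\min\{\alpha(-1), \alpha(1)\}$, the quotient $n/D_\alpha$ tends to $1/\min\{\alpha(-1),\alpha(1)\}$, and hence the outermost bound $e^{n\lambda_\alpha/D_\alpha}$ converges to the constant $e^{\lambda_\alpha/\min\{\alpha(-1),\alpha(1)\}}$ as $n \to \infty$.

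The only genuine subtlety—and the step I would be most careful about—is the sign of $P_\alpha(\mathbf{x}) - P_\beta(\mathbf{x})$. The left-hand ratio is always at least $1$ because $P_\alpha(\mathbf{x})$ maximizes $W_\alpha(\cdot, \mathbf{x})$, so the exponent in Theorem 5's upper bound must be interpreted in the regime where it is non-negative. The clean way to handle this is to read the displacement as $|P_\alpha(\mathbf{x}) - P_\beta(\mathbf{x})|$, for which the uniform bound $|P_\alpha(\mathbf{x}) - P_\beta(\mathbf{x})| \leq 1$ on $[0,1]$ still holds and the second inequality goes through verbatim. Beyond this sign bookkeeping, the argument is entirely a specialization of Theorem 5 and requires no new estimates.
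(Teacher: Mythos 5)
Your proof takes essentially the same route as the paper's (one-line) proof: specialize the upper bound of Theorem 5 at $y = P_\beta(\mathbf{x})$, which is a valid point of $[0,1]$, then bound the displacement by $1$ via monotonicity of the exponential, and invoke Lemma~\ref{lem:limit} for the asymptotic constant. Your closing remark on the sign of $P_\alpha(\mathbf{x}) - P_\beta(\mathbf{x})$, and the suggestion to read the displacement as $|P_\alpha(\mathbf{x}) - P_\beta(\mathbf{x})|$, is a correct refinement of a subtlety the paper's proof silently glosses over.
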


\textit{Proof:} The proof follows directly from Lemma ~\ref{lem:limit}. The final part of the inequality is true because $P_\alpha(\mathbf{x}) - P_\beta(\mathbf{x}) \le 1$.
\qed

\subsection{Bounding the Distance Between Peaks}

The next theorem provides bound on the distance between the peaks $P_\alpha$ and $median$, based on the configuration of agent locations.

\begin{theorem}
Let \( med \) denote the median of the location profile \( \mathbf{x} = (x_1, x_2, \dots, x_n) \). If \( n \) is even, define the median as:
\[
med = \frac{x_{\frac{n}{2}} + x_{\frac{n}{2}+1}}{2}.
\]
Then, the following inequality holds:
\[
|med - P_\alpha(\mathbf{x})| \leq \frac{1}{2} \max_{i=1}^{\lfloor n/2 \rfloor} |d_i^+ - d_i^-|,
\]
where \( d_i^+ = |med - x_{\lfloor n/2 \rfloor + i}| \) and \( d_i^- = |med - x_{\lceil n/2 \rceil - i}| \), and \( \alpha \) is a symmetric utility function.
\end{theorem}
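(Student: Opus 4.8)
The plan is to reduce the general configuration to a perfectly symmetric one and then invoke the maximum-shift bound of Theorem~\ref{thm:max_shift}. First I would pair the agents around the median: the agent lying at distance $d_i^+$ to the right of $med$ is matched with the agent at distance $d_i^-$ to its left. For odd $n$ the median agent is unpaired and already sits exactly at $med$, and for even $n$ the two innermost agents are automatically symmetric about $med$ (both at distance $\tfrac{1}{2}(x_{n/2+1}-x_{n/2})$), so their paired distances coincide and that pair contributes nothing. I would then define a symmetrized profile $\mathbf{x}'$ by relocating each member of a pair to the common distance $\bar d_i = \tfrac{1}{2}(d_i^+ + d_i^-)$ on its own side of $med$, while fixing the median agent. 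Each relocated agent moves by exactly $|d_i^{\pm} - \bar d_i| = \tfrac{1}{2}|d_i^+ - d_i^-|$, so that $\max_{i}|x_i - x_i'| = \tfrac{1}{2}\max_i |d_i^+ - d_i^-|$.

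Next I would show $P_\alpha(\mathbf{x}') = med$. By construction $\mathbf{x}'$ is symmetric about $med$, i.e. every agent at $med + t_j$ is matched by one at $med - t_j$. Writing $y = med + s$ and pairing the summands gives
\[
W_\alpha(med+s, \mathbf{x}') = \sum_{j} \bigl( \alpha(s - t_j) + \alpha(s + t_j) \bigr),
\]
and the assumed symmetry $\alpha(u) = \alpha(-u)$ makes each summand even in $s$. Hence $W_\alpha(\cdot, \mathbf{x}')$ is symmetric about $med$; combined with its concavity from Theorem~\ref{thm:concave} (which forces single-peakedness), the unique maximizer is the center of symmetry, so $P_\alpha(\mathbf{x}') = med$. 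The conclusion then follows immediately from Theorem~\ref{thm:max_shift}:
\[
|med - P_\alpha(\mathbf{x})| = |P_\alpha(\mathbf{x}') - P_\alpha(\mathbf{x})| \leq \max_i |x_i' - x_i| = \tfrac{1}{2}\max_i |d_i^+ - d_i^-|.
\]

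The main obstacle I anticipate is a domain technicality rather than a conceptual one: the symmetrized points $med \pm \bar d_i$ can fall outside $[0,1]$ (for instance when $med$ is close to an endpoint and the pair is very asymmetric), so I must confirm that the tools still apply to such a profile. This is handled by noting that $\alpha$ is defined on all of $\mathbb{R}$ and that Theorems~\ref{thm:locinv} and~\ref{thm:max_shift} constrain only the facility location $y$, not the agent positions, to $[0,1]$; equivalently, since the unconstrained maximizer $med$ already lies in $[0,1]$, the constrained identity $P_\alpha(\mathbf{x}') = med$ holds regardless. The remaining care is purely bookkeeping: matching the pairing above to the stated index ranges for $d_i^+, d_i^-$ and verifying the parity cases (odd versus even $n$) so that every agent is accounted for and the fixed agents contribute a zero shift.
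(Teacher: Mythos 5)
Your proposal is correct and follows essentially the same route as the paper's own proof: symmetrize the profile by moving each paired agent to the averaged distance $\bar d_i = \tfrac{1}{2}(d_i^+ + d_i^-)$ from $med$ (so each agent moves by $\tfrac{1}{2}|d_i^+ - d_i^-|$), observe that the symmetric profile's peak is $med$, and conclude via the maximum-shift bound of Theorem~\ref{thm:max_shift}. If anything, you are more careful than the paper, which leaves the appeal to Theorem~\ref{thm:max_shift}, the symmetry-plus-concavity argument for $P_\alpha(\mathbf{x}') = med$, and the boundary/parity bookkeeping implicit.
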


\textit{Proof sketch}. The central idea behind this proof is recognizing that, if the location profile \( \mathbf{x} \) were perfectly symmetric around the median \( med \), then the peak of the welfare function, \( P_\alpha(\mathbf{x}) \), would coincide with the median due to the symmetry of \( \alpha \).

Thus, the deviation of \( P_\alpha(\mathbf{x}) \) from the median arises solely due to the asymmetry in the distribution of agents around the median. The key question becomes: how far must the agents be shifted to transform the location profile \( \mathbf{x} \) into a symmetric configuration?

To quantify this, we compare the distances of the agents on either side of the median. For each agent positioned at \( x_{\lfloor n/2 \rfloor + i} \) (to the right of the median), we consider the distance \( d_i^+ \) from the median, and similarly, for each agent at \( x_{\lceil n/2 \rceil - i} \) (to the left of the median), we consider the distance \( d_i^- \). 

The difference \( |d_i^+ - d_i^-| \) measures how far these corresponding agents deviate from a symmetric configuration. The maximum of these deviations for all pairs of agents gives a measure of the total asymmetry in the distribution of the agents around the median.

Since the location of \( P_\alpha(\mathbf{x}) \) is influenced by the overall symmetry of the location profile, the deviation of \( P_\alpha(\mathbf{x}) \) from the median is bounded by the total asymmetry, which is expressed as:
\[
|med - P_\alpha(\mathbf{x})| \leq\frac{1}{2} \max_{i=1}^{\lfloor n/2 \rfloor} |d_i^+ - d_i^-|.
\]
This completes the sketch of the proof.
\qed

\begin{figure}[!ht]
    \centering
    \includegraphics[width = \linewidth]{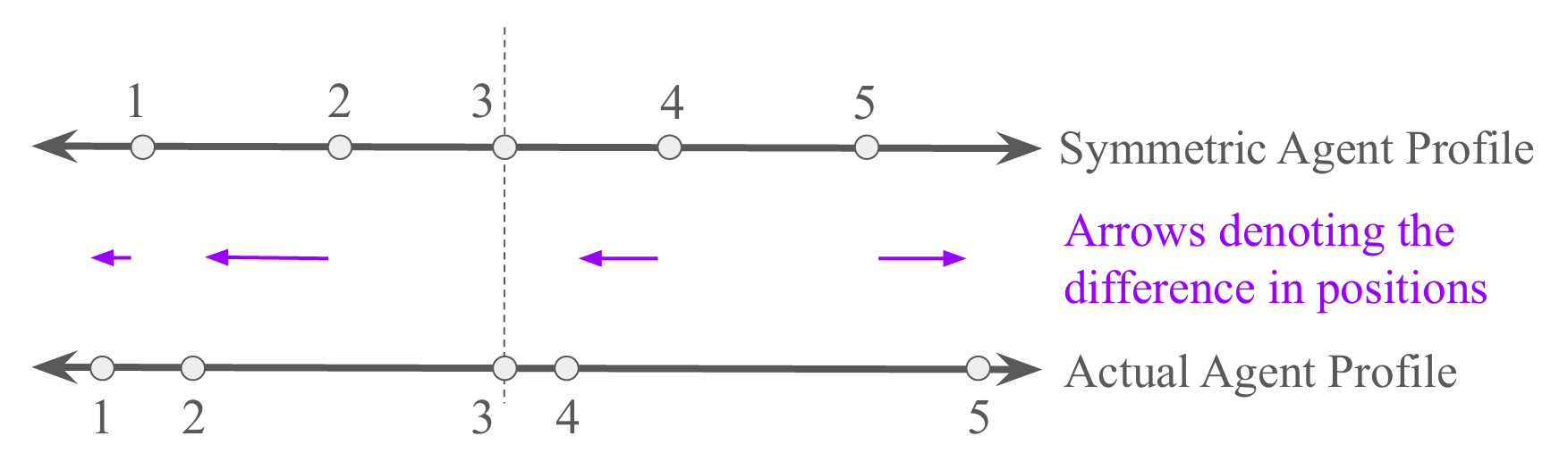}
    \caption{Illustration of Theorem 8: Bounding the Deviation of the Welfare Peak from the Median. The \textcolor{LightViolet}{longest arrow}, representing the maximum deviation from the symmetric agent profile, provides the required upper bound.}

    \label{fig:enter-label}
\end{figure}

In the following section, we will conduct a probabilistic analysis of our {FLIGHT} framework. Specifically, we will consider scenarios in which the agent locations are drawn from a probability distribution. By incorporating this probabilistic perspective, we aim to provide a more comprehensive understanding of how the distribution of agents influences the welfare outcomes within our framework.

\section{Probabilistic Analysis of $\alpha$-Welfare}
\label{sec:prob_alpha_welfare}

In this section, we extend our analysis by introducing a probabilistic framework in which agent locations are treated as random variables. Specifically, we assume that the agents' preferred locations \(x_i\) are i.i.d. samples from a probability distribution \(\mathcal{P}\), i.e., \(x_i \sim \mathcal{P}\). This formulation allows us to examine the behavior of welfare functions when agent positions are drawn from a probabilistic distribution, which is particularly useful in real-world scenarios where exact agent locations may be uncertain.

\subsection{Expected Welfare Function}
We define\footnote{In scenarios where agent locations are probabilistic, the total welfare function becomes a random variable. Accordingly, we extend our prior definitions in Section~\ref{sec:flight_frmwrk} to accommodate this stochastic setting. Such analysis provides insights into ex-ante performance.} the expected total welfare, \( \mathbb{W}_\alpha^\mathcal{P}(y, \mathbf{x}) \), as the expected welfare at location \( y \) when agents are sampled from the probability distribution \( \mathcal{P} \). This expected welfare provides a probabilistic generalization of our previously defined deterministic welfare functions.

\begin{definition}[Empirical Welfare]
The empirical welfare function \( W_\alpha(y, \mathbf{x}) \) is defined as the sum of individual utilities for agents located at \( \mathbf{x} = (x_1, x_2, \dots, x_n) \), where the locations \( x_i \) are sampled from the probability distribution \( \mathcal{P} \). Formally:
\[
W_\alpha(y, \mathbf{x}) = \sum_{i=1}^{n} \alpha(y - x_i).
\]
\end{definition}

\begin{definition}[Expected Welfare]
Let \( \mathbb{W}_\alpha^\mathcal{P}(y, \mathbf{x}) \) represent the expected welfare for agents sampled i.i.d. from the distribution \( \mathcal{P} \). Formally, this is given by:
\[
\mathbb{W}_\alpha^\mathcal{P}(y, \mathbf{x}) = \mathbb{E}_{\mathbf{x} \sim \mathcal{P}^n} \left[ W_\alpha(y, \mathbf{x}) \right],
\]
where \( \alpha \) is the individual utility function, \( \mathbf{x} = (x_1, x_2, \dots, x_n) \) represents the agent locations, and \( n \) is the number of agents.
\end{definition}
\begin{theorem}
The expected welfare function \( \mathbb{W}_\alpha^\mathcal{P}(y, \mathbf{x}) \) is given by:
\[
\mathbb{W}_\alpha^\mathcal{P}(y, \mathbf{x}) = n \times [\alpha \circledast \mathcal{P}](y),
\]
where \([\alpha \circledast \mathcal{P}](y)\) represents the convolution of the utility function \( \alpha \) and the probability distribution \( \mathcal{P} \), evaluated at location \( y \).
\end{theorem}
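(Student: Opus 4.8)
The plan is to reduce the claim to two elementary facts: linearity of expectation and the i.i.d.\ assumption, followed by recognizing a single expectation as a convolution. First I would expand the expected welfare using its definition and the definition of the empirical welfare, writing
\[
\mathbb{W}_\alpha^\mathcal{P}(y, \mathbf{x}) = \mathbb{E}_{\mathbf{x} \sim \mathcal{P}^n}\left[ \sum_{i=1}^{n} \alpha(y - x_i) \right].
\]
Since expectation is linear (and each summand is integrable provided $\alpha$ is, say, bounded or $\mathcal{P}$-integrable on the relevant domain), I would pull the expectation inside the finite sum to obtain $\sum_{i=1}^{n} \mathbb{E}_{x_i \sim \mathcal{P}}[\alpha(y - x_i)]$.

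The second step invokes the i.i.d.\ hypothesis: because every $x_i$ has the same marginal law $\mathcal{P}$, each of the $n$ terms equals the common value $\mathbb{E}_{x \sim \mathcal{P}}[\alpha(y - x)]$, so the sum collapses to
\[
\mathbb{W}_\alpha^\mathcal{P}(y, \mathbf{x}) = n \cdot \mathbb{E}_{x \sim \mathcal{P}}[\alpha(y - x)].
\]
The third and final step is to identify this single expectation with the convolution. Writing the expectation as the integral $\int \alpha(y - x)\, d\mathcal{P}(x)$ against the measure $\mathcal{P}$, this is precisely the definition of $[\alpha \circledast \mathcal{P}](y)$, yielding $\mathbb{W}_\alpha^\mathcal{P}(y, \mathbf{x}) = n \cdot [\alpha \circledast \mathcal{P}](y)$.

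I expect the only genuine obstacle to be notational/definitional rather than mathematical: one must fix the convention that $\circledast$ denotes convolution of the function $\alpha$ against a probability \emph{measure} $\mathcal{P}$ (i.e.\ $[\alpha \circledast \mathcal{P}](y) = \int \alpha(y-x)\, d\mathcal{P}(x)$), which specializes to the familiar density convolution $\int \alpha(y-x) p(x)\, dx$ when $\mathcal{P}$ admits a density $p$. A secondary technical point worth flagging is the integrability of $\alpha$ under $\mathcal{P}$, needed to guarantee the expectation is well-defined and that interchanging sum and expectation is justified; since the sum is finite and $\alpha$ is concave with maximum at $0$ on the bounded domain, this is immediate and requires no deep argument. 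The entire proof is therefore short, and I would present it in full rather than as a sketch.
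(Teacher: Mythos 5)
Your proposal is correct and follows essentially the same route as the paper's proof: expand the expectation of the finite sum by linearity, use the identical marginals from the i.i.d.\ assumption to collapse the sum to $n$ copies of $\mathbb{E}_{x \sim \mathcal{P}}[\alpha(y-x)]$, and recognize that expectation as the convolution $[\alpha \circledast \mathcal{P}](y)$. Your added remarks on integrability and on interpreting $\circledast$ against a measure rather than a density are sensible refinements (the paper implicitly assumes $\mathcal{P}$ has a density, writing $\int \alpha(y-z)\,\mathcal{P}(z)\,dz$), but they do not change the argument.
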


\textit{Proof sketch}.
The result follows from the linearity of expectation and the fact that the sum of \( n \) independent random variables sampled from \( \mathcal{P} \) is equivalent to scaling the expected utility by \( n \). Specifically, since each agent's utility is independently and identically distributed (i.i.d.) from \( \mathcal{P} \), the expected welfare for \( n \) agents can be written as the expected welfare for one agent multiplied by \( n \). This yields the convolution result.
\qed

In many practical applications, especially in large-scale systems, we are often not given the exact locations of all agents but rather a probability distribution \( \mathcal{P} \) that describes their likely positions. When dealing with a large number of agents, directly computing the total welfare based on each individual’s location can become computationally expensive. This challenge necessitates the development of more efficient techniques for estimating the welfare, particularly when a probability distribution is available.

The following two theorems provide insights, allowing us to circumvent the need to compute the welfare function through explicit agent positions. Instead, these results show that we can rely on the probability distribution \( \mathcal{P} \) to derive robust estimates of the welfare function.

First (Theorem 10), we demonstrate that given only the probability distribution \( \mathcal{P} \), the best estimator of the welfare function is the expected welfare. This means that the expected welfare serves as an unbiased estimator with the minimum variance, ensuring the most accurate estimate of the empirical welfare function. This result is highly valuable because it simplifies the computation by focusing on the mean of the distribution rather than requiring the sum over all agents.

Second (Theorem 11), we establish that as the number of agents increases, the empirical welfare converges in probability to the expected welfare. This result, rooted in the weak law of large numbers, guarantees that as the number of agents becomes arbitrarily large, the difference between the empirical and expected welfare diminishes. Hence, for engineering applications where we often deal with large populations, using the expected welfare is not only computationally simpler but also practically equivalent to calculating the empirical welfare.

Together, these results underscore the utility of relying on the expected welfare in large-scale systems. The expected welfare is easier to compute and analyze, and as the number of agents increases, it becomes a reliable proxy for the actual welfare function, making it highly suitable for real-world engineering applications.

\subsection{Optimality in Function Space}

We next examine how the expected welfare function relates to minimizing the distance between welfare functions in a suitable function space.

\begin{definition}
Define the \(\mathcal{F}\)-distance between two functions \(f_1\) and \(f_2\) as:
\[
||f_1 - f_2||_{\mathcal{F}} \triangleq \int_\mathbb{R} \left(f_1(x) - f_2(x)\right)^2 dx
\]
\end{definition}

\begin{theorem}
The function \(\mathbb{W}_\alpha^\mathcal{P}(y, \mathbf{x})\) minimizes the expected \(\mathcal{F}\)-distance to the empirical welfare function \(W_\alpha(y, \mathbf{x})\), i.e., it is the best approximation of the empirical welfare 
in the \(\mathcal{F}\)-distance sense.
\end{theorem}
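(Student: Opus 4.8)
The plan is to recognize this as the function-space analogue of the elementary fact that the mean of a random variable minimizes its expected squared deviation. I would set up the optimization explicitly: among all (measurable, square-integrable) candidate functions $g : \mathbb{R} \to \mathbb{R}$, I seek the minimizer of
$$J(g) = \mathbb{E}_{\mathbf{x} \sim \mathcal{P}^n}\left[ ||g - W_\alpha(\cdot, \mathbf{x})||_{\mathcal{F}} \right] = \mathbb{E}_{\mathbf{x} \sim \mathcal{P}^n}\left[ \int_\mathbb{R} \big(g(y) - W_\alpha(y, \mathbf{x})\big)^2 \, dy \right],$$
and I claim the minimizer is exactly $g^*(y) = \mathbb{W}_\alpha^\mathcal{P}(y, \mathbf{x}) = \mathbb{E}_{\mathbf{x}}[W_\alpha(y, \mathbf{x})]$.

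First I would interchange the expectation over $\mathbf{x}$ with the integral over $y$. Since the integrand $(g(y) - W_\alpha(y, \mathbf{x}))^2$ is non-negative, Tonelli's theorem licenses the swap without further hypotheses, giving
$$J(g) = \int_\mathbb{R} \mathbb{E}_{\mathbf{x}}\left[\big(g(y) - W_\alpha(y, \mathbf{x})\big)^2\right] dy.$$
This is the crucial structural step: it decouples the problem across values of $y$, so that $J(g)$ becomes an integral of independent pointwise objectives, each depending only on the single number $g(y)$.

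Next, for each fixed $y$ I would minimize the inner expectation over the scalar $c = g(y)$. Writing $Z = W_\alpha(y, \mathbf{x})$, the standard bias--variance decomposition gives
$$\mathbb{E}\big[(c - Z)^2\big] = \big(c - \mathbb{E}[Z]\big)^2 + \mathrm{Var}(Z),$$
which is minimized uniquely at $c = \mathbb{E}[Z] = \mathbb{W}_\alpha^\mathcal{P}(y, \mathbf{x})$, with minimum value $\mathrm{Var}(Z)$ independent of $c$. Since this holds for every $y$ and the pointwise minimizers assemble into precisely the function $g^* = \mathbb{W}_\alpha^\mathcal{P}(\cdot, \mathbf{x})$, integrating the pointwise inequality $\mathbb{E}[(g(y) - Z)^2] \geq \mathrm{Var}(Z)$ over $y$ yields $J(g) \geq J(g^*)$ for all $g$, establishing optimality.

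The main obstacle is not the optimization itself---which is elementary once decoupled---but the measure-theoretic bookkeeping: I must confirm that $g^*$ is measurable and that the relevant integrals are meaningful, since $W_\alpha$ need not be square-integrable over $\mathbb{R}$ for utility functions $\alpha$ that fail to decay at infinity (making $y \mapsto \mathrm{Var}(W_\alpha(y,\mathbf{x}))$ potentially non-integrable). I would either restrict the comparison to the subspace of $g$ with $J(g) < \infty$, or simply observe that the pointwise argument delivers the minimizer irrespective of whether the common additive variance term integrates to a finite value. An equivalent, slicker framing views deterministic functions of $y$ as a closed subspace of an $L^2$ product space and identifies $g^*$ as the orthogonal projection (the conditional expectation given $y$); however, the self-contained pointwise route above avoids invoking that machinery and makes the connection to the classical ``mean minimizes mean-squared-error'' fact transparent.
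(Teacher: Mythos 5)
Your proof is correct, and its core --- that for each fixed $y$ the expected squared deviation $\mathbb{E}\left[(c - W_\alpha(y,\mathbf{x}))^2\right]$ is minimized by $c = \mathbb{E}\left[W_\alpha(y,\mathbf{x})\right] = \mathbb{W}_\alpha^\mathcal{P}(y,\mathbf{x})$ --- is the same classical ``mean minimizes mean-squared error'' fact the paper relies on, but your route is genuinely more complete in one respect. The paper's proof quietly replaces the $\mathcal{F}$-distance of its own definition, $\int_\mathbb{R}\left(f_1(x)-f_2(x)\right)^2 dx$, by the pointwise objective $\mathbb{E}\left[(W_\alpha(y,\mathbf{x}) - f(y))^2\right]$ at a fixed $y$ and never reinstates the integral, so strictly it establishes only pointwise optimality; your Tonelli interchange, pointwise minimization, and reassembly of the pointwise minimizers into $g^*$ is exactly the missing step that upgrades this to optimality in the $\mathcal{F}$-distance as actually defined, so you prove the theorem as stated. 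The remaining differences are cosmetic: the paper minimizes the expanded quadratic by setting the derivative with respect to the scalar $f(y)$ to zero, whereas you use the bias--variance decomposition $\mathbb{E}\left[(c-Z)^2\right] = \left(c - \mathbb{E}[Z]\right)^2 + \mathrm{Var}(Z)$, which additionally gives uniqueness of the minimizer for free. Your closing caveat is also well taken and unaddressed by the paper: since $\alpha$ need not decay, $y \mapsto \mathrm{Var}\left(W_\alpha(y,\mathbf{x})\right)$ may fail to be integrable over $\mathbb{R}$ and the objective can be identically infinite, so restricting comparison to candidates with finite objective (or observing that the pointwise inequality holds regardless) is the right repair.
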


\textit{Proof sketch}. By definition of expected welfare, \(W_\alpha(y, \mathcal{P}, n)\) is the mean of the empirical welfare function \(W_\alpha(y, \mathbf{x})\), where \(\mathbf{x}\) is the random vector of agent locations. Given that the \(\mathcal{F}\)-distance is a measure of difference between two functions, the expected value of the empirical welfare is the function that minimizes this difference. The proof mirrors the discrete case. (The Minimum Variance Unbiased Estimator of a random variable is simply its mean.)  \qed

\subsection{Asymptotic result on Welfares}

The weak law of large numbers is a classical result in probability theory. Here, we derive a variation of the law applied to our welfare framework.

\begin{theorem}[Asymptotic result for Welfare Functions]
Let \(W_\alpha(y, \mathbf{x})\) be the empirical welfare function for a sample of \(n\) agents and $X \sim \mathcal{P}$. Then, as \(n \to \infty\), the empirical welfare converges in probability to the expected welfare for a single agent:
\[
\frac{W_\alpha(y, \mathbf{x})}{n} \xrightarrow{p} \mathbb{W}_\alpha^\mathcal{P}(y, X)
\]

\end{theorem}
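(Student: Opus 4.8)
The plan is to recognize this statement as a direct instance of the weak law of large numbers, once the empirical welfare is rewritten as a sample mean. I would fix the facility location $y \in [0,1]$ and define the random variables $Z_i = \alpha(y - x_i)$ for $i = 1, \dots, n$. Because the locations $x_i$ are i.i.d. draws from $\mathcal{P}$ and $\alpha$ is a fixed deterministic function, the $Z_i$ are themselves i.i.d. By definition of the empirical welfare,
\[
\frac{W_\alpha(y, \mathbf{x})}{n} = \frac{1}{n} \sum_{i=1}^{n} \alpha(y - x_i) = \frac{1}{n} \sum_{i=1}^{n} Z_i,
\]
so the left-hand side is exactly the sample mean of the $Z_i$. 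Their common expectation is $\mathbb{E}[Z_1] = \mathbb{E}_{X \sim \mathcal{P}}[\alpha(y - X)]$, which is precisely the single-agent convolution $[\alpha \circledast \mathcal{P}](y) = \mathbb{W}_\alpha^\mathcal{P}(y, X)$ identified in the expected-welfare theorem. Thus the claim reduces to showing $\frac{1}{n}\sum_i Z_i \xrightarrow{p} \mathbb{E}[Z_1]$.

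First I would verify the integrability condition that licenses the law of large numbers. Since both $y$ and every $x_i$ lie in $[0,1]$, the argument $y - x_i$ ranges over $[-1,1]$. Concavity of $\alpha$ guarantees continuity on the interior of this interval, and under the standing assumption that $\alpha$ is bounded on $[-1,1]$ (as used in Theorem 5), each $Z_i$ is a bounded random variable; in particular it has finite mean and finite variance $\sigma^2 = \operatorname{Var}(Z_1) < \infty$. The only delicate point is the boundary behaviour: for utility functions that diverge at the endpoints, such as the Nash utility $\alpha(x) = \log(1 - |x|)$, boundedness can fail, and one instead needs the mild hypothesis $\mathbb{E}_{X \sim \mathcal{P}}[|\alpha(y - X)|] < \infty$ on the distribution $\mathcal{P}$. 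I expect this integrability check to be the main obstacle, since everything downstream is standard.

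With finite variance in hand, I would invoke Chebyshev's inequality together with independence, which gives $\operatorname{Var}\!\big(\tfrac{1}{n}\sum_i Z_i\big) = \sigma^2/n$, to obtain the quantitative bound
\[
\Pr\!\left( \left| \frac{1}{n} \sum_{i=1}^{n} Z_i - \mathbb{E}[Z_1] \right| \geq \varepsilon \right) \leq \frac{\sigma^2}{n \varepsilon^2}
\]
for every $\varepsilon > 0$. Letting $n \to \infty$, the right-hand side tends to $0$, which is exactly convergence in probability of the sample mean to $\mathbb{E}[Z_1]$. If one prefers to assume only a finite first moment, Khinchin's form of the weak law yields the same conclusion without the variance bound. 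Substituting back $\mathbb{E}[Z_1] = \mathbb{W}_\alpha^\mathcal{P}(y, X)$ then gives $\frac{W_\alpha(y, \mathbf{x})}{n} \xrightarrow{p} \mathbb{W}_\alpha^\mathcal{P}(y, X)$, completing the argument.
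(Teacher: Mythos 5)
Your proposal is correct and follows essentially the same route as the paper's proof: both rewrite $\frac{W_\alpha(y,\mathbf{x})}{n}$ as the sample mean of the i.i.d.\ variables $\alpha(y-x_i)$ and invoke the weak law of large numbers to conclude convergence in probability to $\mathbb{E}_{X\sim\mathcal{P}}[\alpha(y-X)]=\mathbb{W}_\alpha^\mathcal{P}(y,X)$. Your explicit verification of the integrability hypothesis (and the caveat about unbounded utilities such as $\alpha(x)=\log(1-|x|)$, where one must assume $\mathbb{E}_{X\sim\mathcal{P}}[|\alpha(y-X)|]<\infty$) is a point the paper leaves implicit, but it refines rather than changes the argument.
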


\textit{Proof sketch}. This is a direct consequence of the weak law of large numbers. As the number of agents increases, the average empirical welfare converges to the expected welfare. The factor of \(n\) normalizes the total welfare, ensuring convergence to the expected value for one agent sampled from \(\mathcal{P}\). \qed

\section{Conclusion}
In this paper, we introduced a flexible and unified framework for facility location problems using $\alpha$-welfare functions, demonstrating that various well-known welfare models, such as \textit{utilitarian}, \textit{egalitarian}, and \textit{Nash welfare}, are special cases within this framework. We established key structural properties of these functions, including \textit{concavity}, \textit{location invariance}, and \textit{monotonicity}, which simplify optimization for facility placement. Additionally, by incorporating a probabilistic perspective, where agent locations are modeled as independent samples from a distribution, we derived the expected welfare function and analyzed how welfare functions approximate each other under uncertainty. Furthermore, we provided \textit{approximation bounds} between different welfare functions and introduced the $\mathcal{F}$-distance as a metric for evaluating discrepancies between empirical and expected welfare functions. This comprehensive framework supports robust decision-making in facility location problems and offers significant potential for future research.

\section{Further Work}

This paper has introduced a unifying framework for various welfare functions in facility location problems. However, there remains significant potential for further research, particularly in extending the scope of this framework and deepening the theoretical understanding of its applicability. In this section, we outline several promising directions for future investigation.

\subsection{Sufficiency: Empirical and Theoretical Perspectives}

While we have demonstrated that utilitarian, egalitarian, and Nash welfare functions can be incorporated within our framework, there are many other welfare notions that have not been explored in this paper. For example, concepts such as \textit{approval radius} ~\cite{approvalradius2020} and \textit{obnoxious facility location} ~\cite{Erkut1989} could be unified under our approach. These topics were excluded for simplicity but present an opportunity for further empirical investigation into the sufficiency of the framework in capturing diverse welfare functions.

From a theoretical standpoint, the key question is whether our framework is capable of capturing \emph{all} welfare functions under certain assumptions. This question leads to the exploration of whether our model is comprehensive enough to encompass all possible welfare measures or whether there are exceptions that lie outside the scope of the framework.

One promising avenue for addressing this question is the \textit{Kolmogorov-Arnold representation theorem}, which provides a general representation for multivariate functions ~\cite{Kolmogorov1957}. 
The Kolmogorov-Arnold theorem suggests that functions can be represented as:
\[
f(x_1, x_2, \dots, x_n) = \sum_{i=1}^{2n+1} \varphi_i \left( \sum_{j=1}^{n} \psi_{ij}(x_j) \right)
\]
where \(\varphi_i\) and \(\psi_{ij}\) are continuous univariate functions.
If we interpret the facility location function \( f(x_1, x_2, \dots, x_n) \) as the location of the facility, then by assuming symmetry (i.e., the location function remains invariant under permutations of agent positions), we can potentially simplify the representation of welfare functions. For instance, the following is true ~\cite{zaheer2017deep} for symmetric $f$:
\[
f(x_1, x_2, \dots, x_n) =  \varphi \left( \sum_{j=1}^{n} \psi(x_j) \right)
\]

 This structure bears a resemblance to our framework, where we maximize the sum of utilities:
\[
P_\alpha(\mathbf{x}) = \arg \max_{y \in [0, 1]} \left(\sum_{i \in N} \alpha(y - x_i)\right)
\]
Thus, a theoretical extension would involve proving the equivalence of our framework with this symmetric function representation or identifying welfare functions that cannot be expressed within this model.

\subsection{Strategyproofness Concerns}

Another critical direction for future research involves the \textit{strategyproofness} of the proposed framework. Strategyproofness is an essential property in facility location problems, ensuring that no agent can benefit from misreporting their true location. A rigorous analysis of the strategyproofness properties within our framework, especially in relation to various welfare functions, has yet to be conducted. Investigating whether the framework ensures strategyproof mechanisms or if certain welfare functions lead to strategic manipulation is a key open question.

\subsection{Approximation and Bound Improvements}

This paper also provides approximation bounds for various welfare functions, but there is considerable scope for improving these bounds. Stronger and more precise bounds can be derived, particularly when the assumption of strictly positive utility is relaxed. Furthermore, approximation bounds can be extended to \textit{probabilistic settings}, where agent locations are treated as random variables drawn from distributions ~\cite{Drezner2001}. Such work could contribute significantly to the practical applicability of the framework in real-world scenarios, where uncertainties in agent positions are common.

Additionally, both deterministic and probabilistic approximations could be enhanced, especially when considering more complex welfare functions or when the utility functions involve diminishing or negative returns.





\bibliography{main}
\bibliographystyle{unsrt}
\balance

\appendix

\section{Notation Table}

\begin{table}[h!]
\centering
\renewcommand{\arraystretch}{1.3} 
\begin{tabularx}{\textwidth}{>{\raggedright\arraybackslash}X >{\raggedright\arraybackslash}X}
    \toprule
    \textbf{Notation} & \textbf{Description} \\
    \midrule
    $N = \{1, \dots, n\}$     & Set of agents \\
    $x_i \in [0, 1]$           & Location of agent $i$ on a line segment \\
    $\mathbf{x} = (x_1, \dots, x_n)$ & Agent location profile \\
    $y$                       & Variable denoting points on [0,1] \\
    $W_{\alpha}(y, \mathbf{x})$& Total welfare function under utility function $\alpha$ \\
    $P_{\alpha}(\mathbf{x})$   & Optimal facility location that maximizes the welfare \\
    $\alpha(x)$                & Utility function reflecting welfare decay over distance \\
    $\mathcal{P}$  & Probability distribution of agents' preferred location\\
    $\mathbb{W}_{\alpha}^\mathcal{P}(y , \mathbf{x} )$& Expected welfare function under utility $\alpha$ and distribution $\mathcal{P}$.\\
    $W_{utilitarian}(y, \mathbf{x})$       & Utilitarian social welfare: $W_{utilitarian}(y, \mathbf{x}) = \sum_{i} (1 - |y- x_i|)$ \\
    $W_{egalitarian}(y, \mathbf{x})$       & Egalitarian social welfare: $W_{egalitarian}(y, \mathbf{x}) = \min_{i} (1 - |y- x_i|)$ \\
    $W_{Nash}(y, \mathbf{x})$      & Nash social welfare: $W_{Nash}(y, \mathbf{x}) = \prod_{i} (1 - |y- x_i|)$ \\
    \bottomrule
\end{tabularx}
\caption{Notations Used in the Facility Location Problem}
\label{tab:notations}
\end{table}

\section{Theorem Proofs}
\label{sec:proof_sec}
\subsection{Theorem 1}
The total welfare function \( W_\alpha(y, \mathbf{x}) \), is concave in \( y \).
\begin{proof}
The total welfare function \( W_\alpha(y, \mathbf{x}) \) is defined as the sum of the individual utilities \( \alpha(y - x_i) \) for each agent \( i \in N \):
\[
W_\alpha(y, \mathbf{x}) = \sum_{i \in N} \alpha(y - x_i),
\]
where \( \alpha(z) \) is a concave function. Since \( W_\alpha(y, \mathbf{x}) \) is the sum of concave functions, we can consider its negative, \( -W_\alpha(y, \mathbf{x}) \):
\[
-W_\alpha(y, \mathbf{x}) = -\sum_{i \in N} \alpha(y - x_i).
\]
The negative of a concave function is convex, and the sum of convex functions is also convex. Therefore, \( -W_\alpha(y, \mathbf{x}) \) is convex. Since the negative of a convex function is concave, it follows that \( W_\alpha(y, \mathbf{x}) \) is concave in \( y \).
\end{proof}

\subsection{Theorem 2}

Let \( \mathbf{x} = (x_1, x_2, \dots, x_n) \) and \( \mathbf{x'} = (x_1 + c, x_2 + c, \dots, x_n + c) \) be two location profiles, where \( c \in \mathbb{R} \) represents a constant shift. Then, the following holds:
\[
W_\alpha(y, \mathbf{x'}) = W_\alpha(y - c, \mathbf{x}),
\]
and consequently, 
\[
P_\alpha(\mathbf{x'}) = P_\alpha(\mathbf{x}) + c.
\]

\begin{proof}
We begin by considering the total welfare function \( W_\alpha(y, \mathbf{x}) \), which is defined as the sum of individual utilities for agents located at \( \mathbf{x} = (x_1, x_2, \dots, x_n) \) with the facility placed at \( y \):
\[
W_\alpha(y, \mathbf{x}) = \sum_{i \in N} \alpha(y - x_i),
\]
where \( \alpha(z) \) denotes the utility function based on the distance \( z = y - x_i \).

Now, consider the shifted location profile \( \mathbf{x'} = (x_1 + c, x_2 + c, \dots, x_n + c) \), where all agents are shifted by a constant \( c \in \mathbb{R} \). The total welfare function for this shifted profile is given by:
\[
W_\alpha(y, \mathbf{x'}) = \sum_{i \in N} \alpha(y - (x_i + c)) = \sum_{i \in N} \alpha((y - c) - x_i).
\]
This expression is equivalent to the total welfare function for the original profile \( \mathbf{x} \), but with the facility located at \( y - c \):
\[
W_\alpha(y, \mathbf{x'}) = W_\alpha(y - c, \mathbf{x}).
\]
Thus, the total welfare function for the shifted profile \( \mathbf{x'} \) is equivalent to the total welfare function for the original profile \( \mathbf{x} \), with the facility location adjusted by \( c \).

Next, we consider the optimal facility location \( P_\alpha(\mathbf{x}) \), which is the value of \( y \) that maximizes the total welfare function \( W_\alpha(y, \mathbf{x}) \):
\[
P_\alpha(\mathbf{x}) = \arg \max_y W_\alpha(y, \mathbf{x}).
\]
By the earlier result, we know that shifting the agent locations by \( c \) corresponds to shifting the facility location by \( c \) as well. Hence, the optimal facility location for the shifted profile \( \mathbf{x'} \) satisfies:
\[
P_\alpha(\mathbf{x'}) = P_\alpha(\mathbf{x}) + c.
\]

In conclusion, we have shown that shifting the agent locations by a constant \( c \) results in a corresponding shift in the optimal facility location by \( c \), and the total welfare function is preserved under this transformation.
\end{proof}

\subsection{Theorem 3}
Proof in the paper. 
\subsection{Theorem 4}
Proof in the paper.
\subsection{Theorem 5}

Let $\alpha$ be a utility function with upper Lipschitz constant $\lambda_\alpha^u$ and lower Lipshitz constant $\lambda_\alpha^d$ Additionally, assume that $\alpha(x) > 0$ for all $x \in [-1,1]$. Then, for all \( y \in [0,1] \), the following inequality holds:
\[
e^{\frac{\lambda_\alpha ^d(P_\alpha(\mathbf{x}) - y)}{n \alpha(0)}} \leq \frac{W_\alpha(P_\alpha(\mathbf{x}), \mathbf{x})}{W_\alpha(y, \mathbf{x})} \leq e^{\frac{n \lambda_\alpha^u (P_\alpha(\mathbf{x}) - y)}{ min\{\alpha(0) + (n-1)\alpha(-1) , \alpha(0) + (n-1)\alpha(1)\}}}
\]

\textit{Proof}. We know that if we constrict the $ln(x)$ function to the domain $(a,b)$, then the upper Lipshitz would be $\frac{1}{a}$ and the lower Lipshitz constant would be $\frac{1}{b}$.

So, we make the following expansion:

\[ln(\frac{W_\alpha(P_\alpha(\mathbf{x}) , \mathbf{x})}{W_\alpha(y , \mathbf{x})})  = \frac{ln(W_\alpha(P_\alpha(\mathbf{x}) , \mathbf{x})) -  ln(W_\alpha(y , \mathbf{x}))}{W_\alpha(P_\alpha(\mathbf{x}) , \mathbf{x}) - W_\alpha(y , \mathbf{x})} \times \frac{W_\alpha(P_\alpha(\mathbf{x}) , \mathbf{x}) - W_\alpha(y , \mathbf{x})}{P_\alpha(\mathbf{x}) - y} \times ({P_\alpha(\mathbf{x}) - y})\]
\subsubsection*{Term 1}
Let us examine the first term of the right hand side of the equation. By the Lipshitz constant of the log, we have:
\[ \frac{1}{max_{y\in[0,1] 
 }W_\alpha(y , \mathbf{x})}\le\frac{ln(W_\alpha(P_\alpha(\mathbf{x}) , \mathbf{x})) -  ln(W_\alpha(y , \mathbf{x}))}{W_\alpha(P_\alpha(\mathbf{x}) , \mathbf{x}) - W_\alpha(y , \mathbf{x})} \le \frac{1}{min_{y\in[0,1] 
 }W_\alpha(y , \mathbf{x})} \]

But, we know:
\[max_{y\in[0,1] 
 }W_\alpha(y , \mathbf{x}) = \alpha(y - x_1) +\hdots + \alpha(y-x_n)  \le n \alpha(0)\]
 (With no loss of generality, assume $x_1 = 0$ and we know that $\alpha(x)$ peaks at $x = 0$.)\\
 And

\[min_{y\in[0,1] 
 }W_\alpha(y , \mathbf{x}) = min\{W_\alpha(0 , \mathbf{x}) , W_\alpha(1 , \mathbf{x})\}\]
 This is because $W_\alpha(y , \mathbf{x})$ is concave, so the minima are at the corners.
 So, we know:
\[W_\alpha(0 , \mathbf{x}) \ge \alpha(0) + (n-1) \alpha(-1)\]
and 
\[W_\alpha(1 , \mathbf{x}) \ge \alpha(0) + (n-1) \alpha(1)\]

So, we can say:

\[\frac{1}{n \alpha(0)}\le \frac{1}{max_{y\in[0,1] 
 }W_\alpha(y , \mathbf{x})}\le\frac{ln(W_\alpha(P_\alpha(\mathbf{x}) , \mathbf{x})) -  ln(W_\alpha(y , \mathbf{x}))}{W_\alpha(P_\alpha(\mathbf{x}) , \mathbf{x}) - W_\alpha(y , \mathbf{x})}
 \]
 \[\le \frac{1}{min_{y\in[0,1] 
 }W_\alpha(y , \mathbf{x})} \le \frac{1}{min\{\alpha(0) + (n-1) \alpha(-1) , \alpha(0) + (n-1) \alpha(1) \}}\]

\subsubsection*{Term 2}
Let us now consider the second term. We know, by definition of Lipshitz property, that:
\[ \lambda_\alpha^d\le \frac{\alpha(P_\alpha(\mathbf{x})  - x_i) - \alpha(y - x_i)}{ P_\alpha(\mathbf{x}) - y} \le \lambda_\alpha^u~ ~~\forall i \in N\]
This implies:
\[ n\lambda_\alpha^d\le\frac{\sum_{i \in N} \alpha(P_\alpha(\mathbf{x})  - x_i) - \sum_{i \in N}\alpha(y - x_i)}{ P_\alpha(\mathbf{x}) - y}\le n\lambda_\alpha^u~~~~\forall i \in N\]
Or equivalently:
\[ n\lambda_\alpha^d\le \frac{W_\alpha(P_\alpha(\mathbf{x})  - x_i) - W_\alpha(y - x_i)}{ P_\alpha(\mathbf{x}) - y} \le n\lambda_\alpha^u~ ~~\forall i \in N\]

\subsubsection*{Final Product}
Now, let us combine all the terms to get:
\[\frac{n\lambda_\alpha^d}{n\alpha(0)}( P_\alpha(\mathbf{x}) - y)\le ln(\frac{W_\alpha(P_\alpha(\mathbf{x}) , \mathbf{x})}{W_\alpha(y , \mathbf{x})}) \le \frac{n\lambda_\alpha^u}{D_\alpha}(P_\alpha(\mathbf{x}) - y)\]

\qed
\subsection{Theorem 8}

Let \( med \) denote the median of the location profile \( \mathbf{x} = (x_1, x_2, \dots, x_n) \). If \( n \) is even, define the median as:
\[
med = \frac{x_{\frac{n}{2}} + x_{\frac{n}{2}+1}}{2}.
\]
Then, the following inequality holds:
\[
|med - P_\alpha(\mathbf{x})| \leq \frac{1}{2} \max_{i=1}^{\lfloor n/2 \rfloor} |d_i^+ - d_i^-|,
\]
where \( d_i^+ = |med - x_{\lfloor n/2 \rfloor + i}| \) and \( d_i^- = |med - x_{\lceil n/2 \rceil - i}| \), and \( \alpha \) is a symmetric utility function.

\begin{proof}
We begin by considering the median \( med \) of the agent location profile \( \mathbf{x} = (x_1, x_2, \dots, x_n) \). If the distribution of the agents were perfectly symmetric around the median, the optimal facility location \( P_\alpha(\mathbf{x}) \), which maximizes the total welfare, would coincide with the median due to the symmetry of the utility function \( \alpha \).

Since \( \alpha \) is symmetric, any deviation of \( P_\alpha(\mathbf{x}) \) from the median arises due to asymmetry in the agent distribution. To analyze this, we define the deviations of the agents from the median on both sides of \( med \). For each agent on the right of the median, positioned at \( x_{\lfloor n/2 \rfloor + i} \), we define the distance:
\[
d_i^+ = |med - x_{\lfloor n/2 \rfloor + i}|.
\]
Similarly, for each agent on the left of the median, positioned at \( x_{\lceil n/2 \rceil - i} \), we define:
\[
d_i^- = |med - x_{\lceil n/2 \rceil - i}|.
\]

The differences \( |d_i^+ - d_i^-| \) measure the extent to which the distribution is asymmetric around the median. To quantify the maximum deviation from symmetry, we define a \textit{reflection distribution}:

Let \( \mathbf{x}_{\text{ref}} \) denote the symmetric distribution obtained by reflecting the agents to form a symmetric profile around \( med \). Specifically, for each agent on the right \( x_{\lfloor n/2 \rfloor + i} \), reflect this agent to the left side of \( med \) to obtain the position \( med - d_i^+ \). Similarly, for each agent on the left \( x_{\lceil n/2 \rceil - i} \), reflect this agent to the right side of \( med \) to obtain the position \( med + d_i^- \).

Next, we construct the \textit{midpoint symmetric distribution} \( \mathbf{x}_{\text{sym}} \), defined as the set of midpoints between each actual point and its corresponding reflection in the distribution. Formally, for each \( i \), the midpoint between an agent at \( x_{\lfloor n/2 \rfloor + i} \) and its reflected counterpart is given by:
\[
x_{\text{sym}}^{\lfloor n/2 \rfloor + i} = x_{\lfloor n/2 \rfloor + i} + \frac{(d_i^- - d_i^+)}{2}.
\]
and for each agent on the left, the corresponding midpoint is:
\[
x_{\text{sym}}^{\lceil n/2 \rceil - i} = x_{\lceil n/2 \rceil - i} - \frac{(d_i^+ - d_i^-)}{2}.
\]
This ensures that \( \mathbf{x}_{\text{sym}} \) represents the closest symmetric profile to \( \mathbf{x} \), formed by averaging each point and its reflection.

The maximum distance any point needs to travel to reach this symmetric distribution is half of the difference between \( d_i^+ \) and \( d_i^- \), i.e.,
\[
\frac{1}{2} |d_i^+ - d_i^-|.
\]

Now, since the facility location \( P_\alpha(\mathbf{x}) \) depends on the symmetry of the location profile, the deviation of \( P_\alpha(\mathbf{x}) \) from the median \( med \) is bounded by this maximum distance of asymmetry:
\[
|med - P_\alpha(\mathbf{x})| \leq \frac{1}{2} \max_{i=1}^{\lfloor n/2 \rfloor} |d_i^+ - d_i^-|.
\]
This completes the proof.
\end{proof}

\subsection{Theorem 9}

The expected welfare function \( W_\alpha(y, \mathcal{P}, n) \) for \( n \) agents, where agent positions are sampled i.i.d. from a probability distribution \( \mathcal{P} \), is given by:
\[
W_\alpha(y, \mathcal{P}, n) = n \times [\alpha \circledast \mathcal{P}](y),
\]
where \([\alpha \circledast \mathcal{P}](y)\) denotes the convolution of the utility function \(\alpha\) and the probability distribution \(\mathcal{P}\), evaluated at point \(y\).

\begin{proof}
Let the agent locations \( x_1, x_2, \dots, x_n \) be i.i.d. random variables drawn from the probability distribution \( \mathcal{P} \), and let the utility function for an agent located at \( x_i \) be given by \( \alpha(y - x_i) \), where \( y \) represents the facility location. The total welfare for \( n \) agents, denoted by \( W_\alpha(y, \mathbf{x}) \), is the sum of the individual utilities:
\[
W_\alpha(y, \mathbf{x}) = \sum_{i=1}^{n} \alpha(y - x_i).
\]
Taking the expectation of \( W_\alpha(y, \mathbf{x}) \) with respect to the agent positions sampled from \( \mathcal{P} \), we obtain the expected welfare function:
\[
W_\alpha(y, \mathcal{P}, n) = \mathbb{E}\left[ \sum_{i=1}^{n} \alpha(y - x_i) \right].
\]
By the linearity of expectation, we can rewrite this as:
\[
W_\alpha(y, \mathcal{P}, n) = \sum_{i=1}^{n} \mathbb{E}[\alpha(y - x_i)].
\]
Since the agents' positions \( x_i \) are i.i.d. random variables drawn from \( \mathcal{P} \), the expected utility for each agent is the same and given by the convolution of \( \alpha \) and \( \mathcal{P} \). Thus, we have:
\[
\mathbb{E}[\alpha(y - x_i)] = \int_{-\infty}^{\infty} \alpha(y - z) \, \mathcal{P}(z) \, dz = [\alpha \circledast \mathcal{P}](y).
\]
Therefore, the expected welfare function simplifies to:
\[
W_\alpha(y, \mathcal{P}, n) = \sum_{i=1}^{n} [\alpha \circledast \mathcal{P}](y) = n \times [\alpha \circledast \mathcal{P}](y).
\]
This completes the proof.
\end{proof}

\subsection{Theorem 10}

\begin{theorem}
The expected welfare function \( \mathbb{W}_\alpha^\mathcal{P}(y, \mathbf{x}) \) minimizes the expected \( \mathcal{F} \)-distance to the empirical welfare function \( W_\alpha(y, \mathbf{x}) \), i.e., it is the best approximation of the empirical welfare in the \( \mathcal{F} \)-distance sense.
\end{theorem}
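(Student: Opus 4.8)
The plan is to reduce the functional optimization over all candidate deterministic approximants to a pointwise (in $y$) scalar problem, where it collapses to the familiar fact that the mean minimizes expected squared deviation. Concretely, I would let $g : \mathbb{R} \to \mathbb{R}$ range over arbitrary deterministic functions and define the objective
\[
J(g) = \mathbb{E}_{\mathbf{x} \sim \mathcal{P}^n}\left[ \|W_\alpha(\cdot, \mathbf{x}) - g\|_{\mathcal{F}} \right] = \mathbb{E}_{\mathbf{x} \sim \mathcal{P}^n}\left[ \int_{\mathbb{R}} \left( W_\alpha(y, \mathbf{x}) - g(y) \right)^2 \, dy \right],
\]
the goal being to show that $J$ is minimized by the choice $g(y) = \mathbb{W}_\alpha^\mathcal{P}(y, \mathbf{x})$.

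First I would interchange the expectation over $\mathbf{x}$ and the integral over $y$. Since the integrand $\left( W_\alpha(y, \mathbf{x}) - g(y) \right)^2$ is nonnegative, Tonelli's theorem justifies the exchange unconditionally (with values in $[0, +\infty]$), giving
\[
J(g) = \int_{\mathbb{R}} \mathbb{E}_{\mathbf{x} \sim \mathcal{P}^n}\left[ \left( W_\alpha(y, \mathbf{x}) - g(y) \right)^2 \right] \, dy.
\]
Then, for each fixed $y$, I would treat $Z_y := W_\alpha(y, \mathbf{x})$ as a scalar random variable and $c := g(y)$ as a scalar constant, and apply the bias--variance decomposition
\[
\mathbb{E}\left[ (Z_y - c)^2 \right] = \mathrm{Var}(Z_y) + \left( \mathbb{E}[Z_y] - c \right)^2,
\]
whose right-hand side is minimized, independently for every $y$, by $c = \mathbb{E}[Z_y] = \mathbb{E}_{\mathbf{x}}\left[ W_\alpha(y, \mathbf{x}) \right]$, which by the definition of expected welfare is precisely $\mathbb{W}_\alpha^\mathcal{P}(y, \mathbf{x})$.

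Finally, because the minimization is achieved pointwise in $y$, monotonicity of the integral upgrades the pointwise inequality to the functional statement: for any competing $g$, the integrand of $J(g)$ dominates that of $J\!\left(\mathbb{W}_\alpha^\mathcal{P}\right)$ at every $y$, hence $J(g) \ge J\!\left(\mathbb{W}_\alpha^\mathcal{P}\right)$. The main subtlety I anticipate is \emph{integrability}: the $\mathcal{F}$-distance integrates over all of $\mathbb{R}$, so the value may diverge if $\alpha$ fails to decay. The pointwise-domination argument sidesteps this, since it establishes optimality even when $J \equiv +\infty$; alternatively, restricting to the relevant case where $\alpha$ is square-integrable makes the minimal value $J\!\left(\mathbb{W}_\alpha^\mathcal{P}\right) = \int_{\mathbb{R}} \mathrm{Var}\!\left( W_\alpha(y, \mathbf{x}) \right) \, dy$ finite and equal to the integrated pointwise variance. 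This is exactly the discrete minimum-variance-unbiased-estimator fact noted in the sketch, recast in function space.
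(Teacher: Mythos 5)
Your proposal is correct, and it rests on the same underlying fact as the paper's proof --- that the mean of a random variable minimizes expected squared deviation --- but your execution is genuinely more complete. The paper's proof works at a single fixed $y$: it expands $\mathbb{E}\bigl[(W_\alpha(y,\mathbf{x}) - f(y))^2\bigr]$, differentiates with respect to the scalar $f(y)$, and sets the derivative to zero. Notably, the quantity it calls $\mathcal{F}(f, W_\alpha)$ there omits the integral over $y$ that appears in the paper's own Definition of the $\mathcal{F}$-distance, so the paper in effect proves pointwise optimality and leaves the passage to the functional statement implicit. You supply exactly the two steps needed to make that passage rigorous: the Tonelli interchange of $\mathbb{E}_{\mathbf{x}}$ and $\int_{\mathbb{R}}\,dy$ (valid unconditionally by nonnegativity of the integrand), and the upgrade from pointwise domination of integrands to domination of the integrals. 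Your use of the bias--variance identity in place of the paper's first-order condition is a minor stylistic difference, but it buys something concrete: it exhibits the minimizer without a formal differentiation in function space, it holds even where the pointwise second moment or the integrated objective is infinite (where the calculus argument is not meaningful), and it identifies the minimal value $J\bigl(\mathbb{W}_\alpha^{\mathcal{P}}\bigr) = \int_{\mathbb{R}} \mathrm{Var}\bigl(W_\alpha(y,\mathbf{x})\bigr)\,dy$ explicitly. Your integrability caveat is also well placed --- the $\mathcal{F}$-distance integrates over all of $\mathbb{R}$, where a nonconstant concave $\alpha$ is unbounded below, so divergence of $J$ is a real possibility that the paper does not address, and your pointwise-domination formulation is the right way to keep the optimality claim meaningful in that regime. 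One small point worth stating for completeness: at each fixed $y$, $\mathbb{E}[Z_y]$ exists (so the bias--variance identity applies in $[0,+\infty]$), which follows here since $\alpha$, being concave on $\mathbb{R}$, is bounded on the compact range of $y - x_i$ for each fixed $y$.
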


\begin{proof}
Let the empirical welfare function be denoted by \( W_\alpha(y, \mathbf{x}) = \sum_{i=1}^{n} \alpha(y - x_i) \), where \( x_i \sim \mathcal{P} \) are i.i.d. random variables sampled from the distribution \( \mathcal{P} \), and let the expected welfare function be \( \mathbb{W}_\alpha^\mathcal{P}(y, \mathbf{x}) = n \times [\alpha \circledast \mathcal{P}](y) \).

We aim to prove that the expected welfare function minimizes the expected \( \mathcal{F} \)-distance between the empirical welfare \( W_\alpha(y, \mathbf{x}) \) and any arbitrary function \( f(y) \).

First, define the \( \mathcal{F} \)-distance between the empirical welfare \( W_\alpha(y, \mathbf{x}) \) and an arbitrary function \( f(y) \) as:
\[
\mathcal{F}(f, W_\alpha) = \mathbb{E}_{x_i \sim \mathcal{P}} \left[ (W_\alpha(y, \mathbf{x}) - f(y))^2 \right].
\]
We seek to minimize this expected \( \mathcal{F} \)-distance. Expanding the expression for \( \mathcal{F}(f, W_\alpha) \), we get:
\[
\mathcal{F}(f, W_\alpha) = \mathbb{E}_{x_i \sim \mathcal{P}} \left[ \left( \sum_{i=1}^{n} \alpha(y - x_i) - f(y) \right)^2 \right].
\]
Expanding the square inside the expectation:
\[
\mathcal{F}(f, W_\alpha) = \mathbb{E}_{x_i \sim \mathcal{P}} \left[ \left( \sum_{i=1}^{n} \alpha(y - x_i) \right)^2 - 2 f(y) \sum_{i=1}^{n} \alpha(y - x_i) + f(y)^2 \right].
\]
Using the linearity of expectation, we can rewrite this as:
\[
\mathcal{F}(f, W_\alpha) = \mathbb{E}_{x_i \sim \mathcal{P}} \left[ \left( \sum_{i=1}^{n} \alpha(y - x_i) \right)^2 \right] - 2 f(y) \mathbb{E}_{x_i \sim \mathcal{P}} \left[ \sum_{i=1}^{n} \alpha(y - x_i) \right] + f(y)^2.
\]
Let us focus on minimizing the expected \( \mathcal{F} \)-distance. The term \( \mathbb{E}_{x_i \sim \mathcal{P}} \left[ \sum_{i=1}^{n} \alpha(y - x_i) \right] \) is simply the expected welfare function, i.e., \( \mathbb{W}_\alpha^\mathcal{P}(y, \mathbf{x}) \). Thus, we have:
\[
\mathcal{F}(f, W_\alpha) = \mathbb{E}_{x_i \sim \mathcal{P}} \left[ \left( \sum_{i=1}^{n} \alpha(y - x_i) \right)^2 \right] - 2 f(y) \mathbb{W}_\alpha^\mathcal{P}(y, \mathbf{x}) + f(y)^2.
\]
To minimize \( \mathcal{F}(f, W_\alpha) \), we take the derivative of this expression with respect to \( f(y) \):
\[
\frac{d}{d f(y)} \mathcal{F}(f, W_\alpha) = -2 \mathbb{W}_\alpha^\mathcal{P}(y, \mathbf{x}) + 2 f(y).
\]
Setting this derivative to zero to find the minimum:
\[
-2 \mathbb{W}_\alpha^\mathcal{P}(y, \mathbf{x}) + 2 f(y) = 0,
\]
which gives:
\[
f(y) = \mathbb{W}_\alpha^\mathcal{P}(y, \mathbf{x}).
\]
Thus, the function that minimizes the expected \( \mathcal{F} \)-distance is \( f(y) = \mathbb{W}_\alpha^\mathcal{P}(y, \mathbf{x}) \).

This completes the proof that the expected welfare function \( \mathbb{W}_\alpha^\mathcal{P}(y, \mathbf{x}) \) is the best approximation of the empirical welfare function \( W_\alpha(y, \mathbf{x}) \) in the \( \mathcal{F} \)-distance sense.
\end{proof}

\subsection{Theorem 11}

\begin{theorem}[Asymptotic result for Welfare Functions]
Let \( W_\alpha(y, \mathbf{x}) \) be the empirical welfare function for a sample of \( n \) agents, where \( \mathbf{x} = (x_1, x_2, \dots, x_n) \) is a sequence of i.i.d. random variables sampled from the distribution \( \mathcal{P} \). As \( n \to \infty \), the empirical welfare converges in probability to the expected welfare for a single agent sampled from \( \mathcal{P} \):
\[
\frac{W_\alpha(y, \mathbf{x})}{n} \xrightarrow{p} \mathbb{W}_\alpha^\mathcal{P}(y, X),
\]
where \( X \sim \mathcal{P} \) and \( \mathbb{W}_\alpha^\mathcal{P}(y, X) \) is the expected welfare function for a single agent.
\end{theorem}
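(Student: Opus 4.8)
The plan is to read the statement as a direct instance of the weak law of large numbers (WLLN), applied to the individual summands of the empirical welfare with the facility location $y$ held fixed. First I would fix an arbitrary $y \in [0,1]$ and introduce the random variables $Z_i \triangleq \alpha(y - x_i)$ for $i = 1, \dots, n$. Since the agent locations $x_1, \dots, x_n$ are i.i.d. draws from $\mathcal{P}$ and $\alpha$ is a deterministic function, the $Z_i$ are themselves i.i.d. Moreover $W_\alpha(y, \mathbf{x}) = \sum_{i=1}^n Z_i$, so the normalized welfare $W_\alpha(y, \mathbf{x})/n$ is exactly the sample mean $\bar{Z}_n \triangleq \frac{1}{n}\sum_{i=1}^n Z_i$.

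Next I would identify the common mean of the $Z_i$ with the target of convergence. Mirroring the expected-welfare (convolution) computation,
\[
\mathbb{E}[Z_i] = \mathbb{E}_{x_i \sim \mathcal{P}}\big[\alpha(y - x_i)\big] = \int_{\mathbb{R}} \alpha(y - z)\,\mathcal{P}(z)\,dz = [\alpha \circledast \mathcal{P}](y) = \mathbb{W}_\alpha^\mathcal{P}(y, X),
\]
so the single-agent expected welfare on the right-hand side of the theorem is precisely $\mathbb{E}[Z_1]$. I would then conclude $\bar{Z}_n \xrightarrow{p} \mathbb{E}[Z_1]$ by the WLLN. The cleanest route is Chebyshev's inequality: using independence, $\operatorname{Var}(\bar{Z}_n) = \operatorname{Var}(Z_1)/n$, hence for every $\varepsilon > 0$,
\[
\Pr\!\left( \left| \bar{Z}_n - \mathbb{E}[Z_1] \right| \ge \varepsilon \right) \le \frac{\operatorname{Var}(Z_1)}{n \varepsilon^2} \xrightarrow[n \to \infty]{} 0.
\]
Substituting $\bar{Z}_n = W_\alpha(y, \mathbf{x})/n$ and $\mathbb{E}[Z_1] = \mathbb{W}_\alpha^\mathcal{P}(y, X)$ gives the stated convergence in probability.

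The main obstacle is purely a matter of integrability: the WLLN requires $Z_1$ to have a finite mean (and, for the Chebyshev argument above, a finite variance). This is automatic in the benign case. Since $x_i \in [0,1]$ and $y \in [0,1]$, the argument $y - x_i$ always lies in $[-1,1]$; whenever $\alpha$ is continuous on this compact interval it is bounded there, so $Z_1$ is a bounded random variable and both moments exist, making the Chebyshev bound immediate. The only delicate case is a utility $\alpha$ that is unbounded at the endpoints — most notably the Nash utility $\alpha(x) = \log(1 - |x|)$, which diverges as $|x| \to 1$. There I would either add the mild standing assumption $\mathbb{E}[\alpha(y-X)^2] < \infty$ under $\mathcal{P}$, or replace Chebyshev with Khinchin's WLLN, which needs only a finite first moment $\mathbb{E}[|\alpha(y-X)|] < \infty$. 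I would close by noting that the claim is pointwise in $y$ (no uniformity over $y$ is asserted), which is exactly why fixing $y$ at the outset is without loss of generality.
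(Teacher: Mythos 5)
Your proposal is correct and follows essentially the same route as the paper's proof: both fix $y$, view $W_\alpha(y,\mathbf{x})/n$ as the sample mean of the i.i.d.\ variables $\alpha(y-x_i)$, identify their common expectation with $\mathbb{W}_\alpha^\mathcal{P}(y,X)$ via the convolution, and invoke the weak law of large numbers. Your additional attention to integrability (Chebyshev under a finite second moment, or Khinchin's WLLN under a finite first moment, which matters for unbounded utilities like the Nash $\alpha(x)=\log(1-|x|)$) is a point the paper's proof silently glosses over, and is a welcome refinement rather than a different argument.
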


\begin{proof}
We begin by recalling the definition of the empirical welfare function \( W_\alpha(y, \mathbf{x}) \), which is the sum of the utilities for \( n \) agents located at \( \mathbf{x} = (x_1, x_2, \dots, x_n) \):
\[
W_\alpha(y, \mathbf{x}) = \sum_{i=1}^{n} \alpha(y - x_i),
\]
where \( \alpha(y - x_i) \) represents the utility of agent \( i \) given the facility location \( y \) and the agent’s location \( x_i \).

We seek to show that the normalized empirical welfare function \( \frac{W_\alpha(y, \mathbf{x})}{n} \) converges in probability to the expected welfare for a single agent, i.e.,
\[
\frac{W_\alpha(y, \mathbf{x})}{n} = \frac{1}{n} \sum_{i=1}^{n} \alpha(y - x_i) \xrightarrow{p} \mathbb{W}_\alpha^\mathcal{P}(y, X),
\]
where \( \mathbb{W}_\alpha^\mathcal{P}(y, X) = \mathbb{E}_{X \sim \mathcal{P}}[\alpha(y - X)] \).

We now apply the weak law of large numbers (WLLN), which states that for a sequence of i.i.d. random variables \( X_1, X_2, \dots, X_n \) with common distribution \( \mathcal{P} \), the sample average converges in probability to the expected value. Specifically, we apply WLLN to the sequence of random variables \( \alpha(y - x_i) \), where each \( x_i \) is drawn i.i.d. from \( \mathcal{P} \).

Since each \( x_i \sim \mathcal{P} \), the random variables \( \alpha(y - x_1), \alpha(y - x_2), \dots, \alpha(y - x_n) \) are i.i.d. with expected value:
\[
\mathbb{E}_{X \sim \mathcal{P}}[\alpha(y - X)] = \mathbb{W}_\alpha^\mathcal{P}(y, X).
\]
By the weak law of large numbers, the sample average of these random variables converges in probability to the expected value:
\[
\frac{1}{n} \sum_{i=1}^{n} \alpha(y - x_i) \xrightarrow{p} \mathbb{E}_{X \sim \mathcal{P}}[\alpha(y - X)] = \mathbb{W}_\alpha^\mathcal{P}(y, X).
\]

Thus, as \( n \to \infty \), the normalized empirical welfare function \( \frac{W_\alpha(y, \mathbf{x})}{n} \) converges in probability to the expected welfare function for a single agent:
\[
\frac{W_\alpha(y, \mathbf{x})}{n} \xrightarrow{p} \mathbb{W}_\alpha^\mathcal{P}(y, X).
\]
This completes the proof.
\end{proof}

\end{document}